\newcommand\numberthis{\addtocounter{equation}{1}\tag{\theequation}}
\theoremstyle{plain}
\newtheorem{theorem}{Theorem}
\newtheorem{lemma}[theorem]{Lemma}
\newtheorem{proposition}[theorem]{Proposition}
\theoremstyle{definition}
\newtheorem{assumption}[theorem]{Assumption}
\newcommand\xqed[1]{%
	\leavevmode\unskip\penalty9999 \hbox{}\nobreak\hfill\quad\hbox{#1}%
}
\newcommand\remarkend{\xqed{$\triangle$}}
\def\@endtheorem{\remarkend\endtrivlist\@endpefalse }
\theoremstyle{remark}
\def\@endtheorem{\endtrivlist\@endpefalse }
\crefname{theorem}{Theorem}{Theorems}
\crefname{lemma}{Lemma}{Lemmas}
\crefname{proposition}{Proposition}{Propositions}
\crefname{corollary}{Corollary}{Corollaries}
\crefname{definition}{Definition}{Definitions}
\crefname{assumption}{Assumption}{Assumptions}
\crefname{remark}{Remark}{Remarks}
\crefname{subsection}{subsection}{subsections}
\crefname{subsubsection}{subsection}{subsections}
\renewcommand{\d}[1]{\ensuremath{\operatorname{d}\!{#1}}}
\renewcommand{\d}[1]{\ensuremath{\operatorname{d}\!{#1}}}
\newcommand{\Tr}{\operatorname{Tr}}
\newcommand{\curl}{\operatorname{curl}}
\newcommand{\s}{\operatorname{s}}
\newcommand{\nls}{{\operatorname{nls}}}
\newcommand{\op}{\operatorname{op}}
\newcommand{\imax}{i_{\mathrm{max}}}
\renewcommand{\P}{\mathbb{P}}
\newcommand{\Q}{\mathbb{Q}}
\newcommand{\R}{\mathbb{R}}
\newcommand{\mfH}{\mathfrak{H}}
\newcommand{\mfS}{\mathfrak{S}}
\newcommand{\mfX}{\mathfrak{X}}
\newcommand{\cE}{\mathcal{E}}
\newcommand\mydots{\ifmmode\mathellipsis\else.\kern-0.08em.\kern-0.08em.\fi}
\DeclareRobustCommand{\SkipTocEntry}[9]{}
\title[Derivation of Hartree theory in 2D]{Derivation of Hartree theory for two-dimensional attractive Bose gases in almost Gross--Pitaevskii regime}
\author[L. Junge]{Lukas Junge}
\address{Department of Mathematics, Copenhagen university, Lyngbyvej 2, 2100 Copenhagen, Denmark}
\email{lj@math.ku.\d{}k}
\author[F. L. A. Visconti]{François L. A. Visconti}
\address{Department of Mathematics, LMU Munich, Theresienstrasse 39, 80333 Munich, Germany}
\email{visconti@math.lmu.de}
\date{\today}
\begin{document}
	\begin{abstract}
		We study the ground state energy of trapped two-dimensional Bose gases with mean-field type interactions that can be attractive. We prove the stability of second kind of the many-body system and the convergence of the ground state energy per particle to that of a non-linear Schrödinger (NLS) energy functional. Notably, we can take any polynomial scaling of the interaction, and even exponential scalings arbitrarily close to the Gross--Pitaevskii regime, which is a drastic improvement on the best-known result for systems with attractive interactions. As a consequence of the stability of second kind we also obtain Bose--Einstein condensation for the many-body ground states for a much improved range of the diluteness parameter.
	\end{abstract}
		
	\maketitle
		
	\tableofcontents
	
	\section{Introduction}
	
	We consider $N$ two-dimensional bosons interacting via a pair potential $w_N$ and trapped by an external potential $V$. The system is described by the N-body Hamiltonian
	\begin{equation}
		\label{eq:hamiltonian}
		H_N = \sum_{i=1}^N\left(-\Delta_{x_i} + V(x_i)\right) + \frac{1}{N-1}\sum_{1\leq i<j\leq N}w_N(x_i - x_j)
	\end{equation}
	acting on
	\begin{equation*}
		\mfH^N = \bigotimes_\textmd{sym}^N\mfH,
	\end{equation*}
	the symmetric tensor product of $N$ copies of the one-body Hilbert space $\mfH \coloneqq L^2(\mathbb{R}^2)$. The interaction potential $w_N$ and the trapping potential $V$ satisfy the following assumptions.
	\begin{assumption}
		\label{ass:potentials}
		The two-body interaction is either of the form
		\begin{equation}
			\label{eq:interaction_potential_beta_scaling}
			w_N = N^{2\beta}w(N^\beta\cdot),
		\end{equation}
		for some fixed $\beta > 0$, or of the form
		\begin{equation}
			\label{eq:interaction_potential_kappa_scaling}
			w_N = e^{2N^\kappa}w(e^{N^\kappa}\cdot),
		\end{equation}
		for some $0 < \kappa < 1$. The function $w : \R^2 \rightarrow \R$ is fixed and satisfies
		\begin{equation*}
			w \in L^1(\R^2)\cap L^{1 + \eta}(\R^2) \quad \textmd{and} \quad w(x) = w(-x),
		\end{equation*}
		for some $\eta > 0$. The external trapping potential $V:\mathbb{R}^2 \rightarrow \mathbb{R}^+$ satisfies
		\begin{equation}
			\label{eq:trapping_potential_assumption}
			V(x) \geq C^{-1}|x|^s - c
		\end{equation}
		for some constants $s > 0$ and $c,C > 0$.
	\end{assumption}
	Under these assumptions the Hamiltonian $H_N$ is bounded from below with the core domain $\mfH^N \cap C_{\textmd{c}}^\infty(\mathbb{R}^{2N})$ and can thus be extended to a self-adjoint operator by Friedrichs' method.
	
	The system we are considering is a mean-field system and is expected to exhibit Bose--Einstein condensation, meaning that almost all particles would live in the same quantum state. In terms of wavefunctions this roughly translates to
	\begin{equation}
		\label{eq:wavefunction_product_state_ansatz}
		\Psi(x_1,\dots,x_N) \approx u^{\otimes N}(x_1,\dots,x_N) = u(x_1)\cdots u(x_N).
	\end{equation}
	Taking the trial state wavefunction $u^{\otimes N}$, it is therefore natural to look at the Hartree energy functional
	\begin{equation}
		\label{eq:hartree_functional}
		\cE^\textmd{H}_N[u] = \dfrac{\left\langle u^{\otimes N}, H_Nu^{\otimes N}\right\rangle}{N} = \langle u,hu\rangle + \dfrac{1}{2}\int_{\mathbb{R}^2}(w_N * |u|^2)|u|^2,
	\end{equation}
	where we defined the 1-body Hamiltonian
	\begin{equation*}
		h = -\Delta + V.
	\end{equation*}
	Since
	\begin{equation*}
		w_N \rightharpoonup a\delta_0 \quad \textmd{with} \quad a \coloneqq \int_{\mathbb{R}^2} w
	\end{equation*}
	in the limit $N \rightarrow \infty,$ the Hartree functional \eqref{eq:hartree_functional} \textit{formally} converges to the non-linear Schrödinger (NLS) energy functional
	\begin{equation}
		\label{eq:nls_energy_functional}
		\cE^{\textmd{nls}}[u] = \left\langle u,hu\right\rangle + \dfrac{a}{2}\int_{\mathbb{R}^2}|u|^4.
	\end{equation}
	The speed at which the convergence of the Hartree functional \eqref{eq:hartree_functional} to the NLS functional \eqref{eq:nls_energy_functional} occurs is measured by the parameter $\beta$ in the polynomial case and $\kappa$ in the exponential case.
	
	For the repulsive case $w \geq 0$, it is known that the limiting energy functional \eqref{eq:nls_energy_functional} admits a subtle correction when the potential scales exponentially with $N$. Namely, when taking $\kappa = 1$ in \eqref{eq:interaction_potential_kappa_scaling} and removing the mean-field factor $(N - 1)^{-1}$ in \eqref{eq:hamiltonian}, we obtain the so-called \textit{Gross--Pitaevskii regime}, for which the convergence to point-wise like interactions still occurs, but where the limiting functional is now \eqref{eq:nls_energy_functional} with $a \approx 8\pi/\log(N/\mathfrak{a}^2)$, where $\mathfrak{a}$ is the scattering length of $w$ \cite{Lieb2005MathematicsBG,Lieb2006DerivationGP,Lieb2001RigorousDGP}. This is because the ground state of \eqref{eq:hamiltonian} includes a non-trivial correction to the ansatz \eqref{eq:wavefunction_product_state_ansatz}, in the form of a short-range correlation structure.
	
	When $w$ is not purely repulsive (e.g. $w \leq 0$), the problem is more difficult because we have to deal with the issue of the system's stability. More precisely, the energy functional $\cE^{\textmd{nls}}$ is bounded from below under the constraint $\|u\|_2 = 1$ if and only if
	\begin{equation}
		\label{eq:condition_nls_bdd_below}
		a \geq -a ^*,
	\end{equation}
	where $a^*$ is the optimal constant of the Gagliardo--Nirenberg inequality
	\begin{equation}
		\label{eq:gagliardo_nirenberg_inequality}
		\left(\int_{\mathbb{R}^2}\vert \nabla u\vert^2\right) \left(\int_{\mathbb{R}^2}\vert u\vert^2\right)\geq \frac{a^*}{2}\int_{\mathbb{R}^2} \vert u\vert^4, \quad \forall u\in H^1\left(\mathbb{R}^2\right).
	\end{equation}
	See \cite{Guo2014MassConcentration,Maeda2010SymGSE,Weinstein1983nls,Zhang2000StabilityAttractiveBEC} for references. By the variational construction of \eqref{eq:nls_energy_functional}, the condition \eqref{eq:condition_nls_bdd_below} is thus necessary for stability of second kind \cite{Lieb2010Stability} of the many-body system:
	\begin{equation}
		\label{eq:stability_second_kind_intro}
		H_N \geq -CN.
	\end{equation}
	Sufficient conditions for stability of second kind are however more restrictive, and we work with
	\begin{equation}
		\label{eq:interaction_potential_negative_assumption}
		\int_{\R^2}w^- < a^*,
	\end{equation}
	where $w_-$ denotes the negative part of $w$, given by $-\min(w,0)$. When the potential $w$ is nonpositive, the condition \eqref{eq:interaction_potential_negative_assumption} is just the strict version of \eqref{eq:condition_nls_bdd_below}. We refer to \cite{Lewin2014TheMA} for a refinement called Hartree stability.
	
	The goal of the present paper is to show that \eqref{eq:interaction_potential_negative_assumption} is a \textit{sufficient} condition to ensure the stability of second kind of the many-body system \eqref{eq:stability_second_kind_intro}. In the polynomial case \eqref{eq:interaction_potential_beta_scaling} and under the condition \eqref{eq:interaction_potential_negative_assumption}, this has been proved in \cite{Lewin2014TheMA,Lewin2017Note2D} for $\beta>0$ sufficiently small, and later in \cite{Nam2020ImprovedStability} for $0< \beta< 1$. In the present work, we extend this stability result to all $\beta \in (0,\infty)$ and to exponential scalings \eqref{eq:interaction_potential_kappa_scaling} for all $\kappa\in(0,1)$.
	
	Alongside proving stability of second kind, we also prove the convergence of the many-body ground state energy to that of the NLS energy functional \eqref{eq:nls_energy_functional} for any $\beta \in (0,\infty)$ (or any $\kappa \in(0,1)$). Consequently, we also obtain the convergence of the many-body ground states to those of the NLS functional \eqref{eq:nls_energy_functional}. Though both the \textit{defocusing} ($a \geq 0$) and \textit{focusing} ($a \leq 0$) cases are covered, the main novelty of the paper lies in the latter case.
	\\
	
	\noindent
	\textbf{Acknowledgments.}
	We thank Phan Thành Nam for his continuous support and his precious feedback. We also thank the anonymous referee. L. J. was partially supported by the European Union. Views and opinions expressed are however those of the authors only and do not necessarily reflect those of the European Union or the European Research Council. Neither the European Union nor the granting authority can be held responsible for them. L. J. was partially supported by the Villum Centre of Excellence for the Mathematics of Quantum Theory (QMATH) with Grant No.10059. L. J. was supported by the grant 0135-00166B from Independent Research Fund Denmark. F. L. A. V. acknowledges partial support by the Deutsche Forschungsgemeinschaft (DFG, German Research Foundation) through the TRR 352 Project ID. 470903074 and by the European Research
	Council through the ERC CoG RAMBAS Project Nr. 101044249.
	
	\section{Main result}
	
	\subsection{Notations}
	
	Let $S_N$ be the group of permutations of $\left\{1,\dots,N\right\}$. For $\psi_1 \in \mfH^{N_1}$ and $\psi_2 \in \mfH^{N_2}$, we define the symmetric tensor product $\psi_1\otimes_\textmd{s}\psi_2\in\mfH^{N_1 + N_2}$ by
	\begin{multline*}
		\psi_1\otimes_\textmd{s}\psi_2(x_1,\dots,x_{N_1 + N_2})\\
		= \dfrac{1}{\sqrt{N_1!N_2!(N_1 + N_2)!}}\sum_{\sigma\in S_{N_1 + N_2}}\psi_1(x_{\sigma(1)},\dots,x_{\sigma(N_1)})\psi_2(x_{\sigma(N_1 + 1)},\dots,x_{\sigma(N_1 + N_2)}).
	\end{multline*}
	We denote the $i$-fold tensor product of a vector $f\in\mfH$ by $f^{\otimes i}\in \mfH^i$, and the $i$-fold tensor product of an operator $A:\mfH\rightarrow\mfH$ by $A^{\otimes i}$. Let $\mfS^1(\mfX)$ be the space of all trace-class operators on a given Hilbert space $\mfX$ \cite{Simon1979TraceIdeals}. Define the set $\mathcal{S}(\mfX)$ of all states on $\mfX$ by
	\begin{equation*}
		\mathcal{S}(\mfX) = \left\{\Gamma \in \mfS^1(\mfX): \Gamma = \Gamma^* \geq 0, \Tr_\mfX\Gamma = 1\right\}
	\end{equation*}
	The $k$-particle reduced density matrix $\Gamma^{(k)}$ of a given state $\Gamma\in \mathcal{S}(\mfH^N)$ is obtained by taking the partial trace over all but the first $k$ variables:
	\begin{equation*}
		\Gamma^{(k)} = \Tr_{k + 1\rightarrow N}\Gamma.
	\end{equation*}
	Moreover, we define the $k$-particle reduced density matrix $\gamma_{\Psi}^{(k)}$ of a normalised wavefunction $\Psi\in\mfH^N$ by
	\begin{equation*}
		\gamma_{\Psi}^{(k)} = \Tr_{k + 1\rightarrow N}\vert\Psi\rangle\langle\Psi\vert.
	\end{equation*}
	
	\subsection{Statement of the main result}
	
	Define the many-body ground state energy per particle
	\begin{equation*}
		e_N = \dfrac{1}{N}\inf\left\{\left\langle\Psi,H_N\Psi\right\rangle: \Psi\in\mfH^{N}, \left\Vert\Psi\right\Vert = 1\right\},
	\end{equation*}
	and the ground state energy of the NLS functional \eqref{eq:nls_energy_functional}
	\begin{equation*}
		e^\nls = \inf\left\{\cE^{\textmd{nls}}[u]: u\in\mfH, \Vert u\Vert = 1\right\}.
	\end{equation*}
	We prove the convergence of $e_N$ to $e^\nls$. The convergence of the ground states of $H_N$ to those of $\cE^{\textmd{nls}}$ follows directly thanks to arguments from \cite{Lewin2014TheMA}.
	\begin{theorem}[Convergence to NLS theory]
		\label{th:convergence_nls_theory}
		Let $w_N$ and $V$ satisfy Assumption~\ref{ass:potentials} and the condition \eqref{eq:interaction_potential_negative_assumption}. Then,
		\begin{equation*}
			\boxed{\lim_{N\rightarrow\infty}e_N = e_{\nls} > - \infty}
		\end{equation*}
		and 
		\begin{equation*}
			\boxed{H_N \geq -CN,}
		\end{equation*}
		for some constant $C > 0$ that depends only on $w$ and $V$. Moreover, for a sequence $\{\Psi_N\}_N$ of ground states of $H_N$, there exists a Borel probability measure $\mu$ supported on the minimisers of $\cE^{\textmd{nls}}$ such that, along a subsequence,
		\begin{equation*}
			\lim_{N\rightarrow\infty}\Tr\left\vert\gamma_{\Psi_N}^{(k)} - \int\vert u^{\otimes k}\rangle\langle u^{\otimes k}\vert\d{}\mu(u)\right\vert = 0,\quad \forall k\in\mathbb{N}.
		\end{equation*}
		If $\cE^{\textmd{nls}}$ has a unique minimiser $u_0$ (up to a phase), then for the whole sequence
		\begin{equation*}
			\lim_{N\rightarrow\infty}\Tr\left\vert\gamma_{\Psi_N}^{(k)} - \vert u_0^{\otimes k}\rangle\langle u_0^{\otimes k}\vert\right\vert = 0,\quad \forall k\in\mathbb{N}.
		\end{equation*}
	\end{theorem}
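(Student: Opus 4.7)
The theorem decomposes naturally into two parts: (i) energy convergence $e_N \to e_{\textmd{nls}} > -\infty$, and (ii) convergence of the reduced density matrices. Part (ii) follows from (i) together with the \emph{stability of second kind} $H_N \ge -CN$, via the standard argument of \cite{Lewin2014TheMA}: tightness of $\gamma_{\Psi_N}^{(1)}$ (from the trap $V$) combined with the trace-class quantum de Finetti theorem yields a Borel probability measure $\mu$ on the unit sphere of $\mathfrak{H}$ along a subsequence, and matching the energy lower bound forces $\mu$ to concentrate on NLS minimizers. Uniqueness of the NLS minimizer (modulo a phase) then collapses the measure to a Dirac mass and promotes the subsequential convergence to full-sequence convergence. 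I therefore focus on (i).

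The upper bound is the easy direction. The NLS functional admits a minimizer $u_0 \in H^1$ under \eqref{eq:trapping_potential_assumption} and \eqref{eq:hartree_stability_strict}. Testing $H_N$ against the product state $u_0^{\otimes N}$ gives $e_N \le \mathcal{E}^{\textmd{H}}_N[u_0]$, and the interaction $\tfrac{1}{2}\int (w_N*|u_0|^2)|u_0|^2$ converges to $\tfrac{a}{2}\int|u_0|^4$ by $w_N \rightharpoonup a\delta_0$ together with the regularity of $u_0$ inherited from the NLS Euler--Lagrange equation.

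For the lower bound, the workflow is: first, prove the stability of second kind; second, use the resulting a priori bound on the one-body energy to control $\gamma_{\Psi_N}^{(k)}$ in the form domain of $h^{\otimes k}$; third, apply the trace-class quantum de Finetti theorem along a subsequence to extract $\mu$; fourth, pass to the limit in $e_N = \Tr(h\gamma_{\Psi_N}^{(1)}) + \tfrac{1}{2}\Tr(w_N \gamma_{\Psi_N}^{(2)})$ using $w_N \rightharpoonup a\delta_0$ together with the Sobolev bounds on $\gamma_{\Psi_N}^{(2)}$ to handle the concentration of $w_N$. This yields $\liminf_N e_N \ge \int \mathcal{E}^{\textmd{nls}}[u]\,\d{}\mu(u) \ge e_{\textmd{nls}}$, which together with the upper bound proves (i).

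The main obstacle is the first step: stability of second kind for every $\beta \in (0,\infty)$. The state-of-the-art result of \cite{Nam2020ImprovedStability} covers only $\beta < 1$, so the crucial contribution of the paper must be extending this to the almost-Gross--Pitaevskii regime. The strategy I envision is a two-scale decomposition: introduce an intermediate scale $M = N^{\beta'}$ with $\beta' \in (0,1)$ to be optimized, split $w_N = \tilde w_M + (w_N - \tilde w_M)$ for a softened potential $\tilde w_M$ concentrated at scale $M^{-1}$, so that the Hamiltonian with interaction $\tilde w_M$ is stable directly by \cite{Nam2020ImprovedStability}, while the short-range remainder $w_N - \tilde w_M$ is bounded below by $-\eta \sum_{i=1}^N(-\Delta_i) - C_\eta N$ via a 2D Dyson/Lieb--Thirring-type comparison that exploits the $L^1 \cap L^2$ regularity of $w$ and the critical scaling of 2D (where interaction and kinetic energies at scale $r$ both behave like $r^{-2}$). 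The delicate point is ensuring $\eta$ remains strictly smaller than the Hartree-stability margin provided by \eqref{eq:hartree_stability_strict}; this likely requires a coherent-state or momentum-localization analysis at the intermediate scale to quantify the loss precisely.
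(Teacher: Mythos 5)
Your overall skeleton is fine where it is standard: the upper bound via the trial state $u_0^{\otimes N}$, and the derivation of part (ii) from energy convergence plus stability of second kind via the arguments of \cite{Lewin2014TheMA}, are exactly what the paper does (and delegates to the literature). The genuine gap is in the one step that constitutes the actual content of the theorem, namely the lower bound for arbitrary $\beta$. Your proposed route is to first establish $H_N \ge -CN$ by splitting $w_N = \tilde w_M + (w_N - \tilde w_M)$ with an intermediate scale $M = N^{\beta'}$, $\beta'<1$, and to bound the short-range remainder below by $-\eta\sum_i(-\Delta_i) - C_\eta N$ through a two-body Dyson/Lieb--Thirring comparison. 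This bound is false in two dimensions for interactions with a non-trivial attractive part: by scaling, the ground state energy of the one-body operator $-\eta\Delta + N^{2\beta}v(N^\beta\cdot)$ equals $N^{2\beta}\inf\sigma(-\eta\Delta + v)$, and in 2D the operator $-\eta\Delta + v$ has a negative eigenvalue for \emph{every} $\eta>0$ whenever $\int v \le 0$ and $v\not\equiv 0$ (this includes your mean-zero remainder $w_N - \tilde w_M$). Hence any operator inequality of the form $w_N - \tilde w_M \ge -\eta(-\Delta) - C_\eta$ forces $C_\eta \gtrsim N^{2\beta}$, which destroys the argument. This is not a technical nuisance but the heart of the matter: the 2D problem is kinetically critical, so no amount of kinetic energy borrowed at a fixed rate $\eta$ can control a concentrating attractive potential via a one- or two-body bound; stability is an intrinsically many-body statement that must use the bosonic symmetry and the Hartree stability margin \eqref{eq:hartree_stability_strict}.

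The paper's actual mechanism is different and is worth internalising. It does not prove stability first; rather, stability falls out of the energy lower bound $e_N \ge \tilde e_N^{\mathrm{H}} - CN^{-11/38}$, which is proved directly by: (a) localising the one-body space into finitely many momentum annuli $P_i = \mathds{1}_{N^{(i-1)\varepsilon}\le \sqrt h < N^{i\varepsilon}}$; (b) the plane-wave estimate of Proposition~\ref{prop:potential_two_projections_lower_bound}, which says that the interaction restricted to sectors of momenta $K_1, K_2$ is bounded below by $-CK_1K_2\log N$, to be compared with the kinetic energy $\max(K_1,K_2)^2$ available in the higher sector --- the off-diagonal gain $K_1K_2 \ll \max(K_i)^2$ plus the mere $\log N$ loss is what removes every restriction on $\beta$; (c) a decomposition of $\Psi_N$ by occupation numbers of the sectors, discarding sparsely occupied sectors at the cost of a $1/\log N$ fraction of the kinetic energy (Lemma~\ref{lemma:state_two_projections_full_trace}); and (d) the \emph{quantitative, finite-dimensional} de Finetti theorem (Theorem~\ref{th:de_finetti}) applied on $P_I\mathfrak H$, whose dimension is controlled by the CLR-type bound \eqref{eq:clr_estimate}, rather than the qualitative trace-class version you invoke. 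Your proposal, as written, would at best reprove the $\beta<1$ result it takes as input.
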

	
	\noindent
	\textit{Remarks.}
	\begin{enumerate}
		\item Thanks to the diamagnetic inequality \cite[Theorem 7.21]{Lieb2001Analysis}, we can easily add an external magnetic field $\mathbf{B} = \curl \mathbf{A}$ with $\mathbf{A}\in L^2_{\textmd{loc}}(\R^2)$ to the system, meaning that $-\Delta_{x_j}$ can be replaced by $(\mathbf{i}\nabla_{x_j} + \mathbf{A}(x_j))^2$ in \eqref{eq:hamiltonian}. Moreover, the same proof can also be used when working on the unit torus or in a finite box with Dirichlet boundary conditions instead of a trapping potential $V$. Heuristically, one can think of this as $s = \infty$ in \eqref{eq:trapping_potential_assumption}.
		\item When considering the dynamics of a two-dimensional Bose gas with attractive interactions, that is when trying to prove that Bose--Einstein condensation is preserved by the Hamiltonian flow $e^{\mathbf{i}tH_N}$, one is also confronted with issues of stability of second kind. The best-known result in this direction is the range $0 < \beta < 1$ and follows from the method of \cite{Jeblick2018DerivationOT} and the stability of second kind proven in \cite{Nam2020ImprovedStability}. It would be interesting to know whether Theorem~\ref{th:convergence_nls_theory} allows for an improvement of the previous result. More restrictive ranges had previously been obtained in \cite{Nam2019NormA} (without the use of \eqref{eq:stability_second_kind_intro}) and in \cite{Chen2017RigorousD2D} (for $s = 2$). See \cite{Chong2021DynamicsLBS,Pickl2010DerivationTD} for related results in 3D. In the defocusing case $w \geq 0$, we refer to \cite{Jeblick2019DerivationTimeGP,Kirkpatrick2011Derivation2DNLS} for results in 2D and \cite{Bossman2020Derivation2DGP,Chen2017RigorousD2D} for the effectively 2D dynamics of strongly confined 3D systems. We mention as well that the instability regime, that is $a < -a^*$ (see \eqref{eq:gagliardo_nirenberg_inequality}), poses natural and interesting questions for the dynamics, namely whether Bose--Einstein condensation persists until the blow-up of the Hartree solution. See \cite{Bossmann2023FocusingD2D} for new results in this direction.
		\item Stability of second kind is an issue as well for three-dimensional Bose gases with interaction potentials
		\begin{equation*}
			N^{3\beta}w(N^\beta\cdot)
		\end{equation*}
		having an attractive part in the dilute regime $\beta > 1/3$. For stability of second kind to hold one must make the additional assumption that the potential is classically stable \cite{Lewin2014TheMA,Triay2018DerivationDDGP}. Regrettably, a straightforward adaptation of the proof of Theorem~\ref{th:convergence_nls_theory} only works up to $\beta < 1/3$, thus failing to capture the dilute regime. The best-known results in the dilute regime are the ranges $1/3 < \beta < 1/3 + s/(45 + 42s)$ \cite{Triay2018DerivationDDGP}, where $s$ is the exponent in \eqref{eq:trapping_potential_assumption}, and $1/3 < \beta < 9/26$ \cite{Nam2020ImprovedStability}.
	\end{enumerate}
	
	\subsection{Strategy of the proof}
	
	We wish to compare the many-body ground state energy per particle $e_N$ to that of the Hartree functional \eqref{eq:hartree_functional}
	\begin{equation*}
		e_N^{\textmd{H}} = \inf\left\{\cE_N^{\textmd{H}}[u]:u\in\mfH,\Vert u\Vert = 1\right\},
	\end{equation*}
	and then use the convergence
	\begin{equation*}
		\label{eq:convergence_hartree_to_nls}
		\lim_{N\rightarrow \infty}e_N^{\textmd{H}} = e^\nls
	\end{equation*}
	(see \cite[Lemma 7]{Lewin2017Note2D}). The upper bound $e_N \leq e_N^{\textmd{H}}$ can be obtained immediately using the trial state $u^{\otimes N}$, and the difficult part is to prove a matching lower bound. Note that, for practical reasons, we shall not compare $e_N$ to $e_N^{\textmd{H}}$, but to the ground state $\tilde{e}_N^{\textmd{H}}$ of a slightly modified Hartree functional, which also converges to $e_\textmd{nls}$.
	
	To prove the lower bound, we shall use a localisation technique in momentum space in order to reduce the infinite dimensional problem to multiple finite dimensional ones (similarly to \cite{Lewin2014TheMA,Lewin2017Note2D,Nam2020ImprovedStability}). Then, we shall apply the following quantitative version of the quantum de Finetti theorem \cite{Brandao2017deFinetti,Li2015QF}.
	\begin{theorem}\label{th:de_finetti}
		Given a Hilbert space $\mfX$ of dimension $d$ and a symmetric state $\Gamma_K\in S(\mfX^K)$, there exists a probability measure $\mu$ on $S(\mfX)$ such that
		\begin{equation*}
			\left\vert\Tr \left[A\otimes B\left( \Gamma_K^{(2)} - \int_{S(\mfX)}\gamma^{\otimes 2} d\mu(\gamma)\right)\right]\right| \leq C\sqrt{\frac{\log d}{K}}\Vert A\Vert_{\op} \Vert B\Vert_{\op}
		\end{equation*}
		for all self-adjoint operators $A$ and $B$ on $\mfX$, and for some universal constant $C > 0$.
	\end{theorem}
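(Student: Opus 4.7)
The plan is to prove this via the information-theoretic approach of Brand\~ao--Harrow, combining the chain rule for quantum mutual information with Pinsker's inequality applied against product measurements. Since $A$ and $B$ are self-adjoint, by spectral decomposition and linearity the estimate reduces to controlling the one-way LOCC (product-measurement) distinguishability between $\Gamma_M^{(2)}$ and a separable state of the form $\int \gamma^{\otimes 2}\,d\mu(\gamma)$---equivalently, bounding $\sum_{x,y}|\Tr((A_x\otimes B_y)(\Gamma_M^{(2)}-\int\gamma^{\otimes 2}d\mu))|$ for arbitrary product POVMs $\{A_x\otimes B_y\}$.

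To construct $\mu$, fix an informationally-complete POVM $\{G_z\}_{z\in\mathcal{Z}}$ on $\mathfrak{X}$ (possible since $\dim\mathfrak{X}=d<\infty$) and apply it to systems $3,\ldots,M$ of $\Gamma_M$, producing a classical random string $Z=(Z_3,\ldots,Z_M)$ with distribution $\mathbb{P}$. Let $\sigma_Z\in S(\mathfrak{X}^2)$ denote the post-measurement conditional state on the first two systems, and set $\tau_Z\coloneqq \Tr_2\sigma_Z$; by symmetry of $\Gamma_M$, $\Tr_1\sigma_Z$ coincides with $\tau_Z$ as well. Define $\mu$ as the pushforward of $\mathbb{P}$ under $z\mapsto \tau_z$; then $\Gamma_M^{(2)}=\mathbb{E}_Z[\sigma_Z]$ while $\int\gamma^{\otimes 2}d\mu(\gamma)=\mathbb{E}_Z[\tau_Z\otimes \tau_Z]$, so the task is to bound $\mathbb{E}_Z|\Tr((A\otimes B)(\sigma_Z-\tau_Z\otimes\tau_Z))|$.

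The quantitative input is the chain rule for quantum mutual information applied to the classical-quantum state associated to the measurement. Since $S(\Gamma_M^{(1)})\leq \log d$, the Holevo bound together with the chain rule give
\begin{equation*}
\sum_{k=3}^M I\bigl(1:Z_k\bigm| Z_3,\ldots,Z_{k-1}\bigr)=I(1:Z_3\cdots Z_M)\leq S\bigl(\Gamma_M^{(1)}\bigr)\leq \log d,
\end{equation*}
so that the average over $k$ is at most $\log d/(M-2)$. By symmetry of $\Gamma_M$, this averaged conditional mutual information controls $\mathbb{E}_Z I(1{:}Y)_{\sigma_Z}$, where $Y$ is the outcome of measuring system $2$ with $\{G_z\}$. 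A symmetric argument for the $A$-side, combined with Pinsker's inequality applied \emph{after} the product POVM $\{A_x\otimes B_y\}$---which reduces the quantum comparison to a comparison of classical distributions---then converts the mutual-information bound into the total-variation bound $\sqrt{2\log d/M}$ claimed in the theorem.

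The main obstacle is obtaining Pinsker-type control in the 1-LOCC metric rather than in the trace norm: a direct trace-norm version of quantum Pinsker would incur an unaffordable factor of $d$. The resolution---standard since Brand\~ao--Harrow---is to measure first, turning the quantum question into a classical one where Pinsker's inequality is tight, and then to invoke the quantum data-processing inequality to translate the chain rule for quantum mutual information into a chain rule for the classical mutual information of the measurement outcomes. Tracking the constants and absorbing the $M-2$ into $M$ at large $M$ yields the bound with constant $\sqrt{2\log d/M}$ as stated.
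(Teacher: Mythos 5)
The paper does not actually prove \cref{th:de_finetti}: its ``proof'' is a pointer to Brand\~ao--Harrow and to the adaptations in \cite{Rougerie2020NLSagain,Nam2020ImprovedStability}. What you have written is therefore a reconstruction of the cited argument, and it does identify the correct circle of ideas (chain rule for mutual information, Pinsker applied after measuring, data processing, reduction from self-adjoint observables to product measurements). As written, however, it has two genuine gaps. The first concerns the construction of $\mu$: you measure \emph{all} of systems $3,\dots,M$ and then need to bound the fully conditioned quantity $I(1{:}Y\mid Z_3\cdots Z_M)$. The chain rule only controls the \emph{average} $\tfrac{1}{M-2}\sum_{k}I(1{:}Z_k\mid Z_3\cdots Z_{k-1})\le \log d/(M-2)$, and conditional mutual information is not monotone under additional conditioning, so ``by symmetry'' the maximally conditioned term is not controlled by that average. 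Concretely, for $M$ exchangeable bits conditioned to have even parity, $I(X_1{:}X_M\mid X_2\cdots X_{M-1})=1$ while the chain-rule average is $1/(M-1)$. The standard repair --- and what Brand\~ao--Harrow actually do --- is to take $\mu$ to be a uniform mixture over the number $\ell$ of measured systems of the corresponding conditional-state ensembles (a ``random stopping'' construction); the resulting error is then the average over $\ell$ of conditional mutual informations, which the chain rule does control after an application of Jensen's inequality.

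The second gap is the universality of $\mu$. Your chain-rule bound is run with the fixed informationally complete POVM $\{G_z\}$, but the Pinsker step requires controlling $I(X{:}Y)$ for the spectral measurements $\{P_x\},\{Q_y\}$ of the \emph{given} $A$ and $B$. Passing from $I(1{:}Z^{G})$ to $I(1{:}Y^{Q})$ by data processing would require the $Q$-outcome to be a stochastic post-processing of the $G$-outcome; informational completeness does not provide this (the reconstruction map generically has negative entries), and the telescoping in your chain rule also silently assumes $Y$ is generated by the same POVM as the $Z_k$'s. Producing a single measure that works simultaneously for all product observables is precisely the nontrivial content of the cited theorem, and your sketch assumes it rather than establishes it. A smaller point: between the $(M-2)$ versus $M$ bookkeeping and the reduction from self-adjoint operators to measurements, the argument as outlined yields $C\sqrt{\log d/M}$ rather than literally $\sqrt{2\log d/M}$; this is harmless for the paper's application, but it should not be presented as a matter of ``tracking the constants.''
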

	
	\begin{proof}
		In \cite{Brandao2017deFinetti}, the statement was proven with $A,B$ replaced by quantum measurement. See \cite[Proposition 3.3]{Rougerie2020NLSagain} and \cite[Lemma 3.3]{Nam2020ImprovedStability} for an adaptation to self-adjoint operators.
	\end{proof}
	
	We shall apply Theorem~\ref{th:de_finetti} on energy subspaces of the one-body Schrödinger operator $h$ acting on $\mfH$ defined by the spectral projections
	\begin{equation}
		\label{eq:spectral_projections_strategy_proof}
		P_1 = \mathds{1}_{\left\{h < N^{2\varepsilon}\right\}} \quad \textmd{and} \quad P_i = \mathds{1}_{\left\{N^{2(i-1)\varepsilon} \leq h < N^{2i\varepsilon}\right\}}, \; 2 \leq  i \leq M,
	\end{equation}
	for some $\varepsilon,M>0$. Note that thanks to Assumption \eqref{eq:trapping_potential_assumption} we have the Cwikel--Lieb--Rosenblum (CLR) type estimate
	\begin{equation}
		\label{eq:clr_estimate}
		\dim\left(\mathds{1}_{\left\{h < N^{2i\varepsilon}\right\}}\mfH\right) \leq CN^{(2 + 4/s)i\varepsilon},
	\end{equation}
	for all $i\in\{1,\dots,M\}$ (see \cite[Lemma 3.3]{Lewin2014TheMA} and references therein). When working on the unit torus or on a box with Dirichlet boundary conditions, the estimate \eqref{eq:clr_estimate} should be replaced by the usual Weyl asymptotic.
	
	Before applying Theorem~\ref{th:de_finetti} we shall write the energy of a ground state $\Psi$ of $H_N$ as
	\begin{equation*}
		\langle\Psi,H_N\Psi\rangle = N\Tr\big(H_{2,N}\Gamma^{(2)}\big),
	\end{equation*}
	using the two-particle reduced density matrix $\Gamma^{(2)}$ of $\Psi$ and the two-body Hamiltonian $H_{2,N}$ defined as
	\begin{equation*}
		H_{2,N} = (h_1 + h_2)/2 + w_N(x_1 - x_2)/2.
	\end{equation*}
	Decomposing the identity according to the spectral projections defined in \eqref{eq:spectral_projections_strategy_proof} we will then roughly get
	\begin{equation*}
		\Tr\big(H_{2,N}\Gamma^{(2)}\big) \gtrsim \sum_{\substack{1\leq i_1,i_2 \leq M\\ 1\leq i_1',i_2'\leq M}}\Tr\big(P_{i_1}\otimes P_{i_2}H_{2,N}P_{i_1'}\otimes P_{i_2'}\Gamma^{(2)}\big),
	\end{equation*}
	where the Sobolev inequality shall be used to neglect the terms containing the last spectral projection $P_{M + 1} \coloneqq \mathds{1}_{\left\{N^{2M\varepsilon} \leq h\right\}}$.
	
	After that, we will decompose the many-body state $\Psi$ according to the occupancy of its energy levels (defined by the spectral projections \ref{eq:spectral_projections_strategy_proof}). Namely, we decompose $\Psi$ as
	\begin{equation*}
		\Psi = \sum_{\underline{J}}\Psi_{\underline{J}} \quad \textmd{with } \Psi_{\underline{J}} = P_1^{\otimes j_1}\otimes_\textmd{s}\dots\otimes_\textmd{s}P_{M + 1}^{\otimes j_{M + 1}}\Psi,
	\end{equation*}
	where the sum is taken over all multi-indices $\underline{J} = (j_1,\dots,j_{M + 1})$ satisfying $|\underline{J}| = N$. Defining
	\begin{equation*}
		\Gamma_{\underline{J},\underline{J}'} = \vert\Psi_{\underline{J}'}\rangle\langle\Psi_{\underline{J}}\vert,
	\end{equation*}
	this decomposition gives
	\begin{equation*}
		\Tr\big(H_{2,N}\Gamma^{(2)}\big) \gtrsim \sum_{\underline{J},\underline{J}'}\sum_{\substack{1\leq i_1,i_2 \leq M\\ 1\leq i_1',i_2'\leq M}}\Tr\big(P_{i_1}\otimes P_{i_2}H_{2,N}P_{i_1'}\otimes P_{i_2'}\Gamma_{\underline{J},\underline{J}'}^{(2)}\big).
	\end{equation*}
	Then, we shall define the index
	\begin{equation*}
		i_{\textmd{max}}(\underline{J}) = \max\left\{i\in\left\{1,\dots,M\right\}:j_i \geq N^{1 - \delta\varepsilon}\right\},
	\end{equation*}
	for some $\delta > 0$, and make an important distinction between the terms that satisfy $i_{\textmd{max}}(\underline{J}) = i_{\textmd{max}}(\underline{J}')$ and for which $i_1,i_2,i_1'$ and $i_2'$ are all less than $i_{\textmd{max}}(\underline{J})$, and the other terms. In the latter case, we shall use Proposition~\ref{prop:plane_wave_estimate} and Lemma~\ref{lemma:state_two_projections_full_trace} (see below) to show that their potential energy is much smaller than the overall kinetic energy of the system and that their contribution to the many-body energy is therefore negligible. Doing so, we will roughly be left with
	\begin{equation}
		\label{eq:proof_strategy_lower_bound}
		\Tr\big(H_{2,N}\Gamma^{(2)}\big) \gtrsim \sum_{\substack{\underline{J},\underline{J}'\\ i_{\textmd{max}}(\underline{J}) = i_{\textmd{max}}(\underline{J}')}}\sum_{\substack{i_1,i_2 = 1\\ i_1',i_2' = 1}}^{i_{\textmd{max}}(\underline{J})}\Tr\big(P_{i_1}\otimes P_{i_2}H_{2,N}P_{i_1'}\otimes P_{i_2'}\Gamma_{\underline{J},\underline{J}'}^{(2)}\big),
	\end{equation}
	where a small amount of the kinetic energy has been sacrificed to bound the error terms. Fixing $i = i_{\textmd{max}}(\underline{J})$ and defining $\P_i = \sum_{k= 1}^iP_k$, we can rewrite the sum over $i_1,i_2,i_1',i_2'$ as
	\begin{equation*}
		\sum_{\substack{i_1,i_2 = 1\\ i_1',i_2' = 1}}^{i}\Tr\big(P_{i_1}\otimes P_{i_2}H_{2,N}P_{i_1'}\otimes P_{i_2'}\Gamma_{\underline{J},\underline{J}'}^{(2)}\big) = \Tr\big(\P_i^{\otimes 2}H_{2,N}\P_i^{\otimes 2}\Gamma_{\underline{J},\underline{J}'}^{(2)}\big),
	\end{equation*}
	which is almost of the right form to apply Theorem~\ref{th:de_finetti} on $\mfX = \P_i\mfH$. 
	
	The only remaining obstacle is that, instead of having $\Gamma_{\underline{J},\underline{J}'}$, which is in general not a even a state, we would like to have a state belonging to $\mathcal{S}(\mfX^K)$ for some $K$. In other words, we would like to know exactly how many particles have momenta in $\P_i$, and discard the information about the others. This can rigorously be done by not only fixing $i = i_{\textmd{max}}(\underline{J}),$ but also the number $K$ of particles that have momenta in $\P_i$, and using Lemma~\ref{lemma:state_one_projection_new_state} to construct a state $\gamma_{i,K} \in \mathcal{S}(\P_i^{\otimes K}\mfH^K)$. Having done so, we shall get
	\begin{equation}
		\label{eq:proof_strategy_lower_bound_2}
		\Tr\big(H_{2,N}\Gamma^{(2)}\big) \gtrsim \sum_{i = 1}^M\sum_K{N \choose 2}^{-1}{K \choose 2}\Vert\Psi_{i,K}\Vert^2\Tr\big(H_{2,N}\gamma_{i,K}^{(2)}\big),
	\end{equation}
	where $\Psi_{i,K}$ is given by
	\begin{equation*}
		\Psi_{i,K} = \sum_{\substack{\underline{J}\\ i_{\textmd{max}}(\underline{J}) = i\\ j_{\textmd{max}}(\underline{J}) = K}}\Psi_{\underline{J}} \quad \textmd{with} \quad j_{\textmd{max}}(\underline{J}) \coloneqq \sum_{k = 1}^{i_{\textmd{max}}(\underline{J})}j_k.
	\end{equation*}
	For each state $\gamma_{i,K}$, an application of Theorem~\ref{th:de_finetti} - which can be done since $\P_i\mfH$ has finite dimension controlled by the CLR type estimate \eqref{eq:clr_estimate} - shall yield
	\begin{equation}
		\label{eq:proof_strategy_de_finetti_lower_bound}
		\Tr\big(H_{2,N}\gamma_{i,K}^{(2)}\big) \gtrsim \int_{\mathcal{S}(\P_i\mfH)}\Tr\big(H_{2,N}\gamma^{\otimes 2}\big)\d{}\mu(\gamma) \gtrsim \tilde{e}_N^{\textmd{H}},
	\end{equation}
	where $\tilde{e}_N^{\textmd{H}}$ is the ground state energy of a slightly modified version of the Hartree functional \ref{eq:hartree_functional} that satisfies
	\begin{equation*}
		\tilde{e}_N^{\textmd{H}} \geq - C \quad \textmd{and} \quad \lim_{N\rightarrow\infty}\tilde{e}_N^{\textmd{H}} = e^\nls.
	\end{equation*}
	Finally, injecting \eqref{eq:proof_strategy_de_finetti_lower_bound} into \eqref{eq:proof_strategy_lower_bound_2} and controlling the error terms using Proposition~\ref{prop:plane_wave_estimate} shall yield the desired result.
	
	\begin{proposition}[Plane wave estimate]
		\label{prop:plane_wave_estimate}
		Let $\mathbf{e}_k$ be the multiplication operator on $\mfH$ by $\cos(k\cdot x)$ or $\sin(k\cdot x)$. Let $P_i$ denote the projection $\mathds{1}_{\{\sqrt{h} < N^{i\varepsilon}\}}$. Then, for all $p\in \mathbb{N}_0$ and $k\in\mathbb{R}^2\setminus\{0\}$,
		\begin{equation}
			\label{eq:plane_wave_estimate}
			\pm P_i\mathbf{e}_kP_i \leq C_p\dfrac{N^{pi\varepsilon}}{\vert k\vert^p}P_i,
		\end{equation}
		for $N$ large enough and for $C_p > 0$ depending only on $p$. Consequently, for all $i_1,i_2\in \{1,\dots,M\}$, we have
		\begin{equation}
			\label{eq:plane_wave_estimate_potential}
			P_{i_1}\otimes P_{i_2} \vert w_N(x-y)\vert P_{i_1}\otimes P_{i_2} \geq -C N^{2\min(i_1,i_2)\varepsilon} P_{i_1}\otimes P_{i_2},
		\end{equation}
		for $N$ large enough and for some constant $C>0$ (depending only on $\Vert w\Vert_{L^1}$).
	\end{proposition}
	
	\begin{lemma}
		\label{lemma:state_two_projections_full_trace}
		Let $P_1,P_2$ and $Q$ be orthogonal projections on $\mfH$. Given a state
		\begin{equation*}
			\Gamma\in \mathcal{S}\big(P_1^{\otimes j_1}\otimes_\textmd{s} P_2^{\otimes j_2}\otimes_\textmd{s} Q^{\otimes (N-j_1-j_2)}\mfH^N\big),
		\end{equation*}
		for some $0\leq j_1,j_2\leq N$, we have
		\begin{equation}
			\label{eq:state_one_projection_full_trace}
			\Tr\big(P_1\Gamma^{(1)}\big)= \dfrac{j_1}{N}, \quad \Tr\big(P_1^{\otimes 2}\Gamma^{(2)}\big) = \dfrac{j_1(j_1 - 1)}{N(N - 1)}
		\end{equation}
		and
		\begin{equation}
			\label{eq:state_two_projections_full_trace}
			\Tr\big(P_1\otimes P_2 \Gamma^{(2)}\big)= \dfrac{j_1j_2}{N(N-1)}.
		\end{equation}
	\end{lemma}
	
	\begin{lemma}
		\label{lemma:state_one_projection_new_state}
		Let $P$ and $Q$ be orthogonal projections on $\mfH$. Given a state
		\begin{equation*}
			\Gamma \in \mathcal{S}\big(P^{\otimes j}\otimes_{\textmd{s}}Q^{\otimes (N - j)}\mfH^N\big),
		\end{equation*}
		for some $j \geq 1$, there exists another state
		\begin{equation*}
			\Gamma_{j}\in\mathcal{S}\big(P^{\otimes j}\mfH^j\big)
		\end{equation*}
		such that
		\begin{equation}
			\label{eq:state_one_porjection_new_state}
			P\otimes P\Gamma^{(2)}P\otimes P = {N \choose 2}^{-1}{j \choose 2}\Gamma_{j}^{(2)}.
		\end{equation}
	\end{lemma}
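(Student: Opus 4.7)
By spectral decomposition of $\Gamma_N$ and linearity of the partial trace $\Gamma\mapsto\Gamma^{(2)}$, it suffices to treat pure states $\Gamma_N = |\Psi\rangle\langle\Psi|$ for a unit vector $\Psi\in P^j\otimes_{\textmd{s}} Q^{N-j}\mathfrak{H}^N$; the output for a general $\Gamma_N$ is then the corresponding convex combination of pure-case outputs. The central structural observation is that the map
\begin{equation*}
T : P^j\mathfrak{H}^j\otimes Q^{N-j}\mathfrak{H}^{N-j} \longrightarrow P^j\otimes_{\textmd{s}} Q^{N-j}\mathfrak{H}^N, \qquad T(\phi\otimes\psi) = \phi\otimes_{\textmd{s}}\psi,
\end{equation*}
is a unitary isomorphism. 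Isometry follows from expanding $\langle\phi\otimes_{\textmd{s}}\psi,\phi'\otimes_{\textmd{s}}\psi'\rangle$ via the explicit symmetrisation formula from Section 2.1 and observing that, since $P\mathfrak{H}\perp Q\mathfrak{H}$, only pairs of permutations mapping $\{1,\dots,j\}$ onto the same $j$-element subset survive; a permutation count gives the $j!(N-j)!$ needed to cancel the normalisation and produce $\langle\phi,\phi'\rangle\langle\psi,\psi'\rangle$. Surjectivity is a matter of dimension.

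Writing $T^{-1}\Psi = \sum_\alpha \lambda_\alpha \phi_\alpha\otimes\psi_\alpha$ in Schmidt form (so that $\{\phi_\alpha\}\subset P^j\mathfrak{H}^j$ and $\{\psi_\alpha\}\subset Q^{N-j}\mathfrak{H}^{N-j}$ are orthonormal and $\sum_\alpha|\lambda_\alpha|^2=1$), I would set
\begin{equation*}
\Gamma_{N,j} \;:=\; \sum_\alpha |\lambda_\alpha|^2 \,|\phi_\alpha\rangle\langle\phi_\alpha| \;\in\; \mathcal{S}\!\left(P^j\mathfrak{H}^j\right),
\end{equation*}
which is manifestly a state, being the partial trace of $|T^{-1}\Psi\rangle\langle T^{-1}\Psi|$ over the $Q^{N-j}\mathfrak{H}^{N-j}$ factor.

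The remaining step is to verify \eqref{eq:state_one_porjection_new_state}. By sesquilinearity of $\Psi\mapsto(|\Psi\rangle\langle\Psi|)^{(2)}$, the identity reduces to the building-block formula
\begin{equation*}
P^{\otimes 2}\bigl(|\phi_\alpha\otimes_{\textmd{s}}\psi_\alpha\rangle\langle\phi_\beta\otimes_{\textmd{s}}\psi_\beta|\bigr)^{(2)}P^{\otimes 2} = \binom{N}{2}^{-1}\!\binom{j}{2}\,\langle\psi_\beta,\psi_\alpha\rangle\,\bigl(|\phi_\alpha\rangle\langle\phi_\beta|\bigr)^{(2)}.
\end{equation*}
Inserting the symmetrisation formula into the definition of the $2$-RDM, the orthogonality $P\mathfrak{H}\perp Q\mathfrak{H}$ kills every term except those in which the two free arguments of the $2$-RDM sit in the ``$\phi$-block'' of the symmetrisation on both sides, while the $N-j$ arguments forming the ``$\psi$-block'' are contracted between the bra and the ket. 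A combinatorial count of the permutations passing this constraint produces the prefactor $\binom{j}{2}/\binom{N}{2}$. Summing over $\alpha,\beta$, the cross-terms with $\alpha\neq\beta$ vanish by orthonormality of $\{\psi_\alpha\}$, yielding $\binom{N}{2}^{-1}\binom{j}{2}\sum_\alpha|\lambda_\alpha|^2(|\phi_\alpha\rangle\langle\phi_\alpha|)^{(2)} = \binom{N}{2}^{-1}\binom{j}{2}\Gamma_{N,j}^{(2)}$.

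The main obstacle is this last combinatorial bookkeeping: correctly identifying which permutations contribute non-trivially and counting them. A useful sanity check for the overall prefactor is the trace identity $\operatorname{Tr}(P^{\otimes 2}\Gamma_N^{(2)}) = j(j-1)/(N(N-1)) = \binom{j}{2}/\binom{N}{2}$, which matches $\operatorname{Tr}\Gamma_{N,j}^{(2)} = 1$ and can be established in one line via the second-quantised number operator $N_P=\sum_i a_i^\ast a_i$ on $P\mathfrak{H}$-modes using $N_P\Psi = j\Psi$ together with the identity $\sum_{i,k}a_i^\ast a_k^\ast a_k a_i = N_P(N_P-1)$.
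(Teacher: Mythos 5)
Your argument is correct, and after unwinding it constructs exactly the same object as the paper: your $\Gamma_{N,j}=\sum_\alpha|\lambda_\alpha|^2|\phi_\alpha\rangle\langle\phi_\alpha|$ coincides with the paper's closed-form choice $\Gamma_{N,j}=\binom{N}{j}\Tr_{j+1\rightarrow N}\left(P^{\otimes j}\otimes Q^{\otimes(N-j)}\Gamma_N\right)$. The route is genuinely different, though. The paper never factorises the Hilbert space: it works with an arbitrary (possibly mixed) $\Gamma_N$ and computes $\Gamma_{N,j}^{(2)}$ directly from the identity $P^{j}\otimes_\textmd{s}Q^{N-j}=\frac{1}{j!(N-j)!}\sum_{\sigma}U_\sigma P^{\otimes j}\otimes Q^{\otimes(N-j)}U_\sigma^*$, the $\mathcal{S}_N$-invariance of $\Gamma_N$ and cyclicity of the trace, the prefactor emerging as a ratio of factorials. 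You instead reduce to pure states, pass through the unitary $T$ onto $P^j\mathfrak{H}^j\otimes Q^{N-j}\mathfrak{H}^{N-j}$ and a Schmidt decomposition, and verify a building-block kernel identity; your count ($\binom{N-2}{j-2}$ admissible position sets for the two external $P$-variables, each realised by $j!(N-j)!$ permutations on either side, against the normalisation $(j!(N-j)!N!)^{-1}$) gives $\binom{N-2}{j-2}/\binom{N}{j}=\binom{j}{2}/\binom{N}{2}$, which is the correct factor. Your version buys conceptual clarity --- $\Gamma_{N,j}$ is visibly the reduced state of $\Psi$ on the $P$-sector of the occupation-number factorisation, so positivity and normalisation are automatic --- at the cost of justifying the unitarity of $T$ and carrying the double Schmidt sum; the paper's version is shorter because one trace computation handles mixed states and all the permutation bookkeeping at once. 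Two points to tighten: ``surjectivity is a matter of dimension'' is not a valid justification here since $Q\mathfrak{H}$ is infinite-dimensional --- argue instead that $T$ is an isometry, hence has closed range, and its range contains the total set $\{\phi\otimes_\textmd{s}\psi\}$; and make explicit that ``orthogonal projections'' is used in the sense $PQ=0$ (as in the paper's application, where $Q_I=\mathds{1}-P_I$), since both your isometry computation and the final identity fail without it.
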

	
	\noindent
	\textbf{Organisation of the paper.} We prove Theorem~\ref{th:convergence_nls_theory} in Section~\ref{section:proof_main_theorem}. Then, we prove Proposition~\ref{prop:plane_wave_estimate}, as well as Lemmas~\ref{lemma:state_two_projections_full_trace}~and~\ref{lemma:state_one_projection_new_state} in Section~\ref{section:main_estimate_other_results}. Lastly, in Section~\ref{section:proof_main_theorem_main_lemma} we prove an important technical lemma used in the proof of Theorem~\ref{th:convergence_nls_theory}.
	
	\section{Proof of Theorem~\ref{th:convergence_nls_theory}}
	
	\label{section:proof_main_theorem}
	
	We begin the proof by splitting $L^2(\mathbb{R}^2)$ into $M + 1$ annuli in momentum space according to the spectral projections
	\begin{equation}
		\label{eq:projections_momentum_space}
		P_1 = \mathds{1}_{\{h < N^{2\varepsilon}\}}, \quad P_i = \mathds{1}_{\{N^{2(i-1)\varepsilon} \leq h < N^{2i\varepsilon}\}}, \quad 2 \leq  i \leq M
	\end{equation}
	and
	\begin{equation}
		\label{eq:projections_momentum_space_largest}
		P_{M + 1} = \mathds{1}_{\{N^{2M\varepsilon}\leq h\}},
	\end{equation}
	for some $\varepsilon, M > 0$ that shall be chosen later. Moreover, we define $\P_j$ and $\Q_j$ by
	\begin{equation*}
		\P_j = \sum_{i = 1}^jP_i \quad \textmd{and} \quad \Q_j = \sum_{i = j + 1}^{M + 1}P_i.
	\end{equation*}
	In the polynomial case, that is for $w_N$ of the form \eqref{eq:interaction_potential_beta_scaling}, the parameter $\varepsilon$ shall be universal, and $M$ shall be a constant taken such that
	\begin{equation}
		\label{eq:relation_M_epsilon_poly}
		2\beta+\frac{1 + \eta}{\eta}\varepsilon < M\varepsilon < 4\beta + \frac{1 + \eta}{\eta}\varepsilon,
	\end{equation}
	where $\eta > 0$ is such that $w\in L^{1 + \eta}(\R^2)$. In the exponential case, that is $w_N$ of the form \eqref{eq:interaction_potential_kappa_scaling}, the parameter $\varepsilon$ shall depend only on $\kappa$. The parameter $M$ shall be taken such that
	\begin{equation}
		\label{eq:relation_M_epsilon_expo}
		\dfrac{2N^\kappa}{\log N} + \frac{1 + \eta}{\eta}\varepsilon < M\varepsilon < \dfrac{4N^\kappa}{\log N} + \frac{1 + \eta}{\eta}\varepsilon,
	\end{equation}			
	where $\eta > 0$ is again such that $w\in L^{1 + \eta}(\R^2)$. In both \eqref{eq:relation_M_epsilon_poly} and \eqref{eq:relation_M_epsilon_expo}, the lower bound arises when getting rid of the projection $P_{M+1}$, and the upper bound states that we do not want $M$ too large.
	
	Let $\Psi$ be the ground state of $H_N$, and $\Gamma$ be the projection $\vert\Psi\rangle\langle\Psi\vert$. Denote by $T$ the two-body operator $(h_1 + h_2)/2$. Define the two-body Hamiltonian
	\begin{equation*}
		H_{2,N} = T + w_N(x_1 - x_2)/2.
	\end{equation*}
	For readability's sake, we shall use the short-hand notation $w_N$ to denote the multiplication operator by $w_N(x_1 - x_2)$. By symmetry of $\Psi$ and $H_N$, we have
	\begin{equation*}
		e_N = \langle\Psi,H_N\Psi\rangle/N = \Tr\big(H_{2,N}\Gamma^{(2)}\big).
	\end{equation*}
	Since we need to extract a small amount of the kinetic energy to control some error terms, we define the modified Hamiltonian
	\begin{equation*}
		\tilde{H}_{2,N} = \left(1 - \xi_N\right)T + w_N/2,
	\end{equation*}
	with $\xi_N$ given by
	\begin{equation*}
		\label{eq:epsilon_kinetic_energy_condition}
		\xi_N = \xi + 1/\log N,
	\end{equation*}
	for some $\xi\in(0,1)$ to be determined. We claim that
	\begin{equation}
		\label{eq:hamiltonian_no_high_momenta_bound}
		\Tr\big(H_{2,N}\Gamma^{(2)}\big) \geq \Tr\big(\P_M^{\otimes 2}\tilde{H}_{2,N}\P_M^{\otimes 2}\Gamma^{(2)}\big) + \tilde{\xi}_N\Tr\big(h\Gamma^{(1)}\big) - CN^{-\varepsilon},
	\end{equation}
	with
	\begin{equation}
		\label{eq:epsilon_kinetic_energy_condition_new}
		\tilde{\xi}_N = \xi + C/\log N.
	\end{equation}
	In fact, using the identity $\mathds{1} = \P_M + P_{M + 1}$, we see that, to prove \eqref{eq:hamiltonian_no_high_momenta_bound}, we need to bound terms of the form
	\begin{equation*}
		\Tr\big(P_{M + 1}\otimes \P_Mw_N\P_M\otimes\P_M\Gamma^{(2)}\big),
	\end{equation*}
	and terms with more than one projection $P_{M + 1}$, which are bounded in the same way. Thanks to the Hölder inequality and the Sobolev inequality, we have
	\begin{multline*}
		\big\vert\Tr\big(P_{M + 1}\otimes \P_Mw_N\P_M\otimes\P_M\Gamma^{(2)}\big)\big\vert\\
		\begin{aligned}[t]
			&\leq CN^{-\varepsilon}\Tr\big(P_{M + 1}\otimes\P_M(N^{4\beta+2\varepsilon\frac{1 + \eta}{\eta}} + h_1)P_{M + 1}\otimes\P_M\Gamma^{(2)}\big)\\
			&\phantom{\leq} + CN^{-\varepsilon}\Tr\big(\P_M\otimes \P_M(1 + h_1)\P_M\otimes \P_M\Gamma^{(2)}\big)\\
			&\leq CN^{-\varepsilon}\Tr\big(P_{M + 1}\otimes \P_M(1 + h_1)P_{M + 1}\otimes \P_M\Gamma^{(2)}\big)\\
			&\phantom{\leq} + CN^{-\varepsilon}\Tr\big(\P_M\otimes \P_M(1 + h_1)\P_M\otimes\P_M\Gamma^{(2)}\big).
		\end{aligned}
	\end{multline*}
	In the first inequality we used the Sobolev inequality with weights
	\begin{equation*}
		\Vert f\Vert_{L^p}^2 \leq C_p\tau^{-(p-2)/p}\left(\Vert f\Vert^2 + \tau\Vert\nabla f\Vert^2\right), \quad \forall f\in H^1(\R^2),
	\end{equation*}
	which holds for any $p \geq 2$ and for all $\tau > 0$, and that can be proven following the same proof as \cite[Theorem 8.5]{Lieb2001Analysis}. In the second inequality we used \eqref{eq:relation_M_epsilon_poly} (or \eqref{eq:relation_M_epsilon_expo}) and the estimate $N^{2M\varepsilon}P_{M + 1} \leq P_{M + 1}hP_{M + 1}$. This proves \eqref{eq:hamiltonian_no_high_momenta_bound} for $N$ large enough.
	
	After that, we decompose $\Psi$ according to the occupancy of the energy levels defined by the projections \eqref{eq:projections_momentum_space} and \eqref{eq:projections_momentum_space_largest}. For this purpose, we define the function $\Psi_{\underline{J}}$ by
	\begin{equation*}
		\Psi_{\underline{J}} = P_1^{\otimes j_1}\otimes_{\s} \cdots\otimes_{\s} P_{M+1}^{\otimes j_{M+1}}\Psi,
	\end{equation*}
	for any multi-index $\underline{J} = (j_1,\dots,j_{M + 1})$ satisfying $\vert\underline{J}\vert = N$. Define also $\Gamma_{\underline{J}}$ and $\Gamma_{\underline{J},\underline{J}'}$ by
	\begin{equation*}
		\Gamma_{\underline{J}} = \left\vert\Psi_{\underline{J}}\right\rangle\left\langle\Psi_{\underline{J}}\right\vert \quad \textmd{and} \quad \Gamma_{\underline{J},\underline{J}'} = \left\vert\Psi_{\underline{J}'}\right\rangle\left\langle\Psi_{\underline{J}}\right\vert.
	\end{equation*}
	Moreover, for any multi-index $\underline{J} = (j_1,\dots,j_{M + 1})$, we define
	\begin{equation*}
		\underline{J}^{(i_1,i_2)} \coloneqq
		\left\{
		\begin{array}{ll}
			(j_1,\dots,j_{i_1} + 1,\dots,j_{i_2} + 1, \dots, j_{M + 1}) & \textmd{if $i_1 \neq i_2$,}\\
			(j_1,\dots,j_{i_1} + 2, \dots, j_{M + 1}) & \textmd{otherwise}
		\end{array}
		\right.
	\end{equation*}
	as well as
	\begin{equation*}
		\Gamma_{\underline{J}^{(i_1,i_2;i_1',i_2')}} \coloneqq \Gamma_{\underline{J}^{(i_1,i_2)},\underline{J}^{(i_1',i_2')}} = \vert\Psi_{\underline{J}^{(i_1',i_2')}}\rangle\langle\Psi_{\underline{J}^{(i_1,i_2)}}\vert.
	\end{equation*}
	Then, using that
	\begin{equation*}
		P_{i_1'}\otimes P_{i_2'}\Gamma^{(2)}P_{i_1}\otimes P_{i_2} = \sum_{\underline{J}}P_{i_1'}\otimes P_{i_2'}\Gamma_{\underline{J}^{(i_1,i_2;i_1',i_2')}}^{(2)}P_{i_1}\otimes P_{i_2},
	\end{equation*}
	we can write
	\begin{align*}
		\Tr\big(H_{2,N}\Gamma^{(2)}\big) &\geq \sum_{\underline{J}}\sum_{\substack{i_1, i_2 = 1\\ i_1', i_2' = 1}}^M\Tr\big(P_{i_1}\otimes P_{i_2}\tilde{H}_{2,N}P_{i_1'}\otimes P_{i_2'}\Gamma_{\underline{J}^{(i_1,i_2;i_1',i_2')}}^{(2)}\big) + \tilde{\xi}_N\Tr\big(h\Gamma^{(1)}\big)\\
		&\phantom{\geq} - CN^{-\varepsilon}, \numberthis \label{eq:hamiltonian_two_particle_split}
	\end{align*}
	where we are summing over multi-indices $\underline{J}$ satisfying $\vert\underline{J}\vert = N - 2$.
	
	Let us bound the first term in the right-hand side of \eqref{eq:hamiltonian_two_particle_split}. To do so, we define
	\begin{equation}
		\label{eq:index_max_def}
		i_{\textmd{max}}(\underline{J}) = \max\left\{i\in\{1,\dots,M\}:j_i \geq N^{1- \delta\varepsilon}\right\},
	\end{equation}
	for some $\delta > 0$ that shall be fixed later.  We shall sometimes omit $\underline{J}$ and simply write $i_{\textmd{max}}$ for readability's sake. We take the convention that $i_{\textmd{max}}(\underline{J}) = 0$ if the set on the right-hand side is empty. Next, we make the following important distinction regarding the sum over $i_1,i_2,i_1',i_2'$: either $i_{\textmd{max}}(\underline{J}) = i_{\textmd{max}}(\underline{J}')$ and $i_1,i_2,i_1',i_2'\leq i_{\textmd{max}}(\underline{J})$ - in which case we shall use Theorem~\ref{th:de_finetti} - or not. In the latter case, we need to bound terms of the form
	\begin{equation*}
		\Tr\big(P_{i_1}\otimes P_{i_2}w_NP_{i_1'}\otimes P_{i_2'}\Gamma_{\underline{J}^{(i_1,i_2;i_1',i_2')}}^{(2)}\big),
	\end{equation*}
	where one of the four indices, say $i_1$, is such that $j_{i_1} < N^{1 - \delta\varepsilon}$. This is done in the following lemma, whose proof is given in Section~\ref{section:proof_main_theorem_main_lemma}.
	\begin{lemma}
		\label{lemma:bound_interaction_low_occupancy}
		Take $w_N$ as in Assumption~\ref{ass:potentials}, and $M$ and $\varepsilon$ satisfying either \eqref{eq:relation_M_epsilon_poly} or \eqref{eq:relation_M_epsilon_expo}. Let $\delta \geq 6$ and define $\imax$ as in \eqref{eq:index_max_def} for any given multi-index $\underline{J}$. Then,
		\begin{multline}
			\label{eq:bound_interaction_low_occupancy}
			\sum_{J}\sum_{i_1=i_{\max}+1}^M\sum_{i_2,i_3,i_4=1}^M\big\vert\Tr\big(P_{i_1}\otimes P_{i_2}w_NP_{i_1'}\otimes P_{i_2'}\Gamma_{\underline{J}^{(i_1,i_2;i_1',i_2')}}^{(2)}\big)\big\vert\\
			\leq C(N^{-\varepsilon} + N^{2\varepsilon - (1 - \kappa)})\Tr\big((1 + h)\Gamma^{(1)}\big)
		\end{multline}
		for some constant $C$ that depends only on $\left\Vert w\right\Vert_1$, and with $\kappa = 0$ in the polynomial case \eqref{eq:relation_M_epsilon_poly}. Here, we are summing over all multi-indices $\underline{J}$ satisfying $\vert\underline{J}\vert = N - 2$.
	\end{lemma}
	Injecting \eqref{eq:bound_interaction_low_occupancy} into \eqref{eq:hamiltonian_two_particle_split}, we obtain
	\begin{align*}
		\Tr\big(H_{2,N}\Gamma^{(2)}\big) &\geq \sum_{\underline{J}}\sum_{\substack{i_1, i_2 = 1\\ i_1', i_2' = 1}}^{i_{\textmd{max}}}\Tr\big(P_{i_1}\otimes P_{i_2}\tilde{H}_{2,N}P_{i_1'}\otimes P_{i_2'}\Gamma_{\underline{J}^{(i_1,i_2;i_1',i_2')}}^{(2)}\big) + \tilde{\xi}_N\Tr\big(h\Gamma^{(1)}\big)\\
		&\phantom{\geq} - CN^{-\varepsilon} - CN^{2\varepsilon -(1-\kappa)}, \numberthis
	\end{align*}
	for some new $\tilde{\xi}_N$ that is still of the form \eqref{eq:epsilon_kinetic_energy_condition_new}, and under the condition
	\begin{equation}
		\label{eq:condition_epsilon_kappa}
		\varepsilon < \dfrac{1 - \kappa}{2},
	\end{equation}
	with $\kappa = 0$ in the polynomial case \eqref{eq:interaction_potential_beta_scaling}. We do not mention the condition \eqref{eq:condition_epsilon_kappa} further since it will be fulfilled by our choice of $\varepsilon$.
	
	Before we can apply Theorem~\ref{th:de_finetti}, we need to do a bit of rewriting. We start by decomposing over the different values that $\imax$ can take:
	\begin{equation*}
		\sum_{\underline{J}}\sum_{\substack{i_1, i_2 = 1\\ i_1', i_2' = 1}}^{i_{\textmd{max}}}P_{i_1'}\otimes P_{i_2'}\Gamma_{\underline{J}^{(i_1,i_2;i_1',i_2')}}^{(2)}P_{i_1}\otimes P_{i_2} = \sum_{i=1}^M\sum_{\substack{i_1, i_2 = 1\\ i_1', i_2' = 1}}^i\sum_{\substack{\underline{J}\\ \imax(\underline{J}) = i}}P_{i_1'}\otimes P_{i_2'}\Gamma_{\underline{J}^{(i_1,i_2;i_1',i_2')}}^{(2)}P_{i_1}\otimes P_{i_2}.
	\end{equation*}
	Then, we use that, at fixed $i_1,i_2,i_1',i_2'$, the sum over $\underline{J}$ can be rewritten as
	\begin{equation}
		\label{eq:two_prdm_mutli_indices_rewritten}
		\sum_{\substack{\underline{J}\\ \imax(\underline{J}) = i}}P_{i_1'}\otimes P_{i_2'}\Gamma_{\underline{J}^{(i_1,i_2;i_1',i_2')}}^{(2)}P_{i_1}\otimes P_{i_2} = \sum_{\substack{\underline{J}',\underline{J}''\\ \imax(\underline{J}') = i\\ \imax(\underline{J}'') = i}}P_{i_1'}\otimes P_{i_2'}\Gamma_{\underline{J}',\underline{J}''}^{(2)}P_{i_1}\otimes P_{i_2},
	\end{equation}
	where we are summing over $\underline{J}$'s satisfying $\vert\underline{J}\vert = N - 2$ on the left-hand side, and over $\underline{J}',\underline{J}''$ satisfying $\vert\underline{J}'\vert = \vert\underline{J}''\vert = N$ on the right-hand side. Notice first that, due to the presence of the projections $P_{i_1}\otimes P_{i_2}$ and $P_{i_1'}\otimes P_{i_2'}$, the multi-index $\underline{J}''$ is entirely determined by the choice of $\underline{J}'$. Said differently, the double sum on the right-hand side is in essence only a single sum. Moreover, for every $\underline{J}'$, there is exactly one $\underline{J}$ such that $\underline{J}^{(i_1,i_2)} = \underline{J}'$. Combining the previous two arguments is enough to prove \eqref{eq:two_prdm_mutli_indices_rewritten}. We now define $j_{\textmd{max}}(\underline{J})$ by
	\begin{equation*}
		j_{\textmd{max}}(\underline{J}) = \sum_{i = 1}^{i_{\textmd{max}}(\underline{J})}j_i,
	\end{equation*}
	as well as $\Psi_{i,K}$ and $\Gamma_{i,K}$ by
	\begin{equation}
		\label{eq:wavefunction_decomposed_i_K}
		\Psi_{i,K} =  \sum_{\substack{\underline{J}\\ i_{\textmd{max}}(\underline{J}) = i\\ j_{\textmd{max}}(\underline{J}) = K}}\Psi_{\underline{J}} \quad \textmd{and} \quad \Gamma_{i,K} = \left\vert\Psi_{i,K}\right\rangle\left\langle\Psi_{i,K}\right\vert.
	\end{equation}
	As it shall be useful later, we use the convention that, when $i = 0$, the condition $j_{\textmd{max}} = K$ in the definition of $\Psi_{0,K}$ \eqref{eq:wavefunction_decomposed_i_K} should be replaced by $j_{M + 1} = K$. With these new definitions, we may rewrite \eqref{eq:two_prdm_mutli_indices_rewritten} as
	\begin{equation*}
		\sum_{\substack{\underline{J}\\ \imax(\underline{J}) = i}}P_{i_1}\otimes P_{i_2}\Gamma_{\underline{J}^{(i_1,i_2;i_1',i_2')}}^{(2)}P_{i_1'}\otimes P_{i_2'} = \sum_{K}P_{i_1}\otimes P_{i_2}\Gamma_{i,K}^{(2)}P_{i_1'}\otimes P_{i_2'}.
	\end{equation*}
	Summing up, we have
	\begin{equation}
		\Tr\big(H_{2,N}\Gamma^{(2)}\big) \geq \sum_{i =1}^M\sum_{K}\Tr\big(\P_i^{\otimes 2}\tilde{H}_{2,N}\P_i^{\otimes 2}\Gamma_{i,K}^{(2)}\big) + \tilde{\xi}_N\Tr\big(h\Gamma^{(1)}\big) - CN^{-\varepsilon} - CN^{2\varepsilon-(1-\kappa)}, \label{eq:hamiltonian_lower_bound_before_de_finetti}
	\end{equation}
	where we are summing over $N - MN^{1 - \delta\varepsilon}\leq K\leq N$.
	
	Given that $\Psi_{i,K}$ is in $\P_i^{\otimes K}\otimes_{\textmd{s}}\Q_{i}^{\otimes (N - K)}\mfH^N$ and is symmetric, Lemma~\ref{lemma:state_one_projection_new_state} allows us to find a symmetric state $\gamma_{i,K}\in\mathcal{S}(\P_i^{\otimes K}\mfH^{K})$ such that
	\begin{equation*}
		\P_i^{\otimes 2}\Gamma_{i,K}^{(2)}\P_i^{\otimes 2} = {N \choose 2}^{-1}{K \choose 2}\left\Vert\Psi_{i,K}\right\Vert^2\gamma_{i,K}^{(2)}.
	\end{equation*}
	Then, writing $w_N(x-y)$ in Fourier, applying Theorem~\ref{th:de_finetti} and using Proposition~\ref{prop:plane_wave_estimate} and the CLR type estimate \eqref{eq:clr_estimate} to bound the error terms, we obtain
	\begin{align*}
		\Tr\big(\tilde{H}_{2,N} \gamma_{i,K}^{(2)}\big)
		&= \int_{\mathcal{S}(\P_i\mfH)}\Tr\big(\tilde{H}_{2,N}\gamma^{\otimes 2}\big)\d{}\mu_{i,K}(\gamma)\\
		&\phantom{=} + \left(1 - \xi_N\right)\Tr\left(T\left[\gamma_{i,K}^{(2)} - \int_{\mathcal{S}(\P_i\mfH)}\gamma^{\otimes 2}\d{}\mu_{i,K}(\gamma)\right]\right)\\
		&\phantom{=} + \dfrac{1}{2}\int_{\mathbb{R}^2}\d{}k\hat{w}(N^{-\beta}k)\Tr\left(e^{\mathbf{i}k\cdot x}e^{-\mathbf{i}k\cdot y}\left[\gamma_{i,K}^{(2)} - \int_{\mathcal{S}(\P_i\mfH)}\gamma^{\otimes 2}\d{}\mu_{i,K}(\gamma)\right]\right)\\
		&\geq \int_{\mathcal{S}(\P_i\mfH)}\Tr\big(\tilde{H}_{2,N}\gamma^{\otimes 2}\big)\d{}\mu_{i,K}(\gamma)\\
		&\phantom{\geq} - C\sqrt{\dfrac{M\varepsilon\log N}{K}}\left(N^{2i\varepsilon} + \int_{\mathbb{R}^2}\d{}k\Vert \P_ie^{\mathbf{i}k\cdot x}\P_i\Vert_{\textmd{op}}^2\vert \hat{w}(N^{-\beta}k)\vert\right)\\
		&\geq \int_{\mathcal{S}(\P_i\mfH)}\Tr\big(\tilde{H}_{2,N}\gamma^{\otimes 2}\big)\d{}\mu_{i,K}(\gamma) - C\sqrt{\dfrac{M\varepsilon\log N}{K}}N^{2i\varepsilon}, \numberthis \label{eq:hamiltonian_estimate_de_finetti}
	\end{align*}
	where $\mu_{i,K}$ is the probability measure from Theorem~\ref{th:de_finetti} applied to the state $\gamma_{i,K}$. To prove the last inequality, we decomposed the integral over $k$ just as we did in the proof of Proposition~\ref{prop:plane_wave_estimate}. As a result of \eqref{eq:hamiltonian_lower_bound_before_de_finetti} and \eqref{eq:hamiltonian_estimate_de_finetti}, we have
	\begin{align*}
		\Tr\big(H_{2,N}\Gamma^{(2)}\big) &\geq \sum_{i = 1}^M\sum_K{N \choose 2}^{-1}{K \choose 2}\left\Vert\Psi_{i,K}\right\Vert^2\int_{\mathcal{S}(\P_i\mfH)}\Tr\big(\tilde{H}_{2,N}\gamma^{\otimes 2}\big)\d{}\mu_{i,K}(\gamma)\\
		&\phantom{\geq} - C\sum_{i=1}^M\sum_K{N \choose 2}^{-1}{K \choose 2}\left\Vert\Psi_{i,K}\right\Vert^2\sqrt{\dfrac{M\varepsilon\log N}{K}}N^{2i\varepsilon}\\
		&\phantom{\geq} + \tilde{\xi}_N\Tr\big(h\Gamma^{(1)}\big) - CN^{-\varepsilon} - CN^{2\varepsilon - (1 - \kappa)}. \numberthis
		\label{eq:hamiltonian_estimate_de_finetti_after}
	\end{align*}
	On the one hand, using ${N \choose 2}^{-1}{K \choose 2} \leq 1$, we can bound the second term in \eqref{eq:hamiltonian_estimate_de_finetti_after} by
	\begin{multline*}
		N^{\delta\varepsilon/2 - 1/2}\sqrt{M\varepsilon\log N}\sum_{i = 1}^M\sum_K{N \choose 2}^{-1}{K \choose 2}\left\Vert \Psi_{i,K}\right\Vert^2N^{2i\varepsilon}\\
		\begin{aligned}[t]
			&\leq N^{\delta\varepsilon/2 - 1/2}\sqrt{M\varepsilon\log N}\sum_{\underline{J}}\left\Vert \Psi_{\underline{J}}\right\Vert^2N^{2i_{\textmd{max}}(\underline{J})\varepsilon}\\
			&\leq CN^{(3\delta/2 + 2)\varepsilon - 1/2}\sqrt{M\varepsilon\log N}\sum_{\underline{J}}\left\Vert \Psi_{\underline{J}}\right\Vert^2N^{2(i_{\textmd{max}}(\underline{J}) - 1)\varepsilon}\dfrac{j_{i_{\textmd{max}}(\underline{J})}}{N}\\
			&\leq CN^{(3\delta/2 + 2)\varepsilon - 1/2}\sqrt{M\varepsilon\log N}\Tr\big(h\Gamma^{(1)}\big).
		\end{aligned}
	\end{multline*}
	Choosing $\delta = 6$ and $\varepsilon < 1/22$ in the polynomial case \eqref{eq:interaction_potential_beta_scaling}, and $\delta = 6$ and $\varepsilon < (1 - \kappa)/22$ in the exponential case \eqref{eq:interaction_potential_kappa_scaling}, we see that the previous term can be absorbed by the third term in the the right-hand side of \eqref{eq:hamiltonian_estimate_de_finetti_after}. On the other hand, to simplify the first term in \eqref{eq:hamiltonian_estimate_de_finetti_after} we define the Borel measure
	\begin{equation}
		\label{eq:de_finetti_borel_measure}
		\mu_N = \sum_{i = 1}^M\sum_K{N \choose 2}^{-1}{K \choose 2}\Vert\Psi_{i,K}\Vert^2\mu_{i,K},
	\end{equation}
	and the modified Hartree energy functional
	\begin{equation}
		\label{eq:hartree_functional_modified}
		\tilde{\cE}_N^{\textmd{H}}[\gamma] = (1 - \xi_N)\Tr(h\gamma) + \dfrac{1}{2}\Tr(w_N\gamma^{\otimes 2}).
	\end{equation}
	
	Summing up, the inequality \eqref{eq:hamiltonian_estimate_de_finetti_after} simplifies to
	\begin{equation}
		\label{eq:hamiltonian_two_particle_split_last}
		\Tr\big(H_{2,N}\Gamma^{(2)}\big) \geq \int_{\mathcal{S}(\P_M\mfH)}\tilde{\cE}_N^{\textmd{H}}[\gamma]\d{}\mu_N(\gamma) + \tilde{\xi}_N\Tr\big(h\Gamma^{(1)}\big) - CN^{-\varepsilon} - CN^{2\varepsilon - (1 - \kappa)},
	\end{equation}
	with $\mu_N$ given by \eqref{eq:de_finetti_borel_measure}, $\tilde{\cE}_N^{\textmd{H}}$ by \eqref{eq:hartree_functional_modified}, and $\tilde{\xi}_N$ of the form
	\begin{equation*}
		\tilde{\xi}_N = \xi + C/\log N,
	\end{equation*}
	for some $\xi\in(0,1)$ to be determined. To conclude the proof of Theorem~\ref{th:convergence_nls_theory}, we wish to take $\xi = 0$ and use that
	\begin{equation}
		\label{eq:hartree_functional_modified_liminf}
		\liminf_{N \rightarrow \infty}\int_{\mathcal{S}(\P_M\mfH)}\tilde{\cE}_N^{\textmd{H}}[\gamma]\d{}\mu_N(\gamma) \geq e^{\textmd{nls}},
	\end{equation}
	which we now show.
	
	Before proving \eqref{eq:hartree_functional_modified_liminf}, we show that the estimate \eqref{eq:hamiltonian_two_particle_split_last} implies the boundedness of the kinetic energy:
	\begin{equation}
		\label{eq:kinetic_energy_bounded}
		\Tr\big(h\Gamma^{(1)}\big) \leq C.
	\end{equation}
	Thanks to the convexity of the kinetic energy \cite[Theorem 7.8]{Lieb2001Analysis} and the Cauchy--Schwarz inequality, we have
	\begin{equation*}
		\tilde{\cE}_N^{\mathrm{H}}[\gamma] \geq (1 - \xi_N)\int_{\R^2}\vert\nabla\sqrt{\rho_\gamma}\vert^2 - \dfrac{1}{2}\int w_-\int_{\R^2}\rho_\gamma^2,
	\end{equation*}
	where $\rho_\gamma$ denotes the density $\gamma(x;x)$ (defined properly by the spectral decomposition of $\gamma$).	Using the Gagliardo--Nirenberg inequality \eqref{eq:gagliardo_nirenberg_inequality} and the assumption \eqref{eq:interaction_potential_negative_assumption}, we deduce that $\tilde{\cE}_N^{\mathrm{H}}[\gamma]$ is nonnegative for $\xi$ small enough and $N$ large enough. Combining this with the boundedness from above of $e_N$, which follows from a simple trial argument, we obtain \eqref{eq:kinetic_energy_bounded}.
	
	As a consequence of \eqref{eq:kinetic_energy_bounded}, the measure $\mu_N$ satisfies
	\begin{equation}
		\label{eq:de_finetti_borel_measure_almost_proba}
		1 - CN^{-\delta\varepsilon} - CN^{-2M\varepsilon} \leq \int_{\mathcal{S}(\P_M\mfH)}\d{}\mu_N(\gamma) \leq 1.
	\end{equation}
	The upper bound follows from the estimate ${N \choose 2}^{-1}{K \choose 2} \leq 1$ and
	\begin{equation*}
		\sum_{i = 0}^{M+1}\sum_K\Vert\Psi_{i,K}\Vert^2 = 1.
	\end{equation*}
	For the lower bound, we use
	\begin{align*}
		\int_{\mathcal{S}(\P_M\mfH)}\d{}\mu_N(\gamma) = \sum_{i = 1}^M\sum_K\Tr\big(\P_i^{\otimes 2}\Gamma_{i,K}^{(2)}\big) &= \sum_{i = 0}^{M + 1}\sum_K\left\Vert\Psi_{i,K}\right\Vert^2 - \sum_{K}\left\Vert\Psi_{0,K}\right\Vert^2\\
		&\phantom{=} - \sum_{i = 1}^M\sum_K\big(\Tr\big(\Q_{i}\Gamma_{i,K}^{(1)}\big) + \Tr\big(\Q_i\otimes\P_i\Gamma_{i,K}^{(2)}\big)\big).
	\end{align*}
	The first term in the right is equal to $1$. The last two can be bounded using Lemma~\ref{lemma:state_two_projections_full_trace}, and the remaining one satisfies
	\begin{equation*}
		\sum_K\Vert\Psi_{0,K}\Vert^2 \leq CN^{-2M\varepsilon}\Tr\big(h\Gamma^{(1)}\big) \leq CN^{-2M\varepsilon}.
	\end{equation*}
	This proves \eqref{eq:de_finetti_borel_measure_almost_proba}.
	
	For purely attractive potentials, i.e. $w \leq 0$, we can conclude the proof of Theorem~\ref{th:convergence_nls_theory} rather easily. Take $\xi = 0$ and define the modified NLS energy functional $\tilde{\cE}_N^\nls$ by
	\begin{equation*}
		\tilde{\cE}_N^\nls[u] = (1 - \xi_N)\langle u,hu\rangle - \dfrac{1}{2}\int_{\R^2}w_-\int_{\R^2}\vert u\vert^4,
	\end{equation*}
	and let $\tilde{e}_N^\nls$ denote its ground state energy. Then, we write the mixed state $\gamma$ in \eqref{eq:hamiltonian_two_particle_split_last} as
	\begin{equation*}
		\gamma = \sum_{j}\lambda_j\vert u_j\rangle\langle u_j\vert,
	\end{equation*}
	for some orthonormal family $(u_j)_j$, and use the Cauchy--Schwarz inequality to deduce
	\begin{equation}
		\label{eq:energy_functional_nls_mixed_state_attractive_case}
		\tilde{\cE}_N^\nls \geq \sum_j\lambda_j\tilde{\cE}_N^\nls[u_j] \geq \tilde{e}_N^\nls,
	\end{equation}
	The convergence of $\tilde{e}_N^\nls$ to $e^\nls$ (see \cite[Section 4]{Lewin2014TheMA} and \cite[Appendix A]{Lewin2017Note2D}) and the estimate \eqref{eq:de_finetti_borel_measure_almost_proba} yields \eqref{eq:hartree_functional_modified_liminf}.  Combining this with \eqref{eq:hamiltonian_two_particle_split_last} yields the convergence of the energies in Theorem~\ref{th:convergence_nls_theory}. Stability of second kind follows from the boundedness from below of $\tilde{e}_N^\nls$.

	When $w$ is not purely attractive, we cannot use the simple argument \eqref{eq:energy_functional_nls_mixed_state_attractive_case}, and we need to prove that the sequence of measures $(\mu_N)_N$ converges to a probability measure supported on pure states $\vert u\rangle\langle u\vert$. More or less the same proof as in \cite[Section 3.4.]{Rougerie2020NLSagain} applies in our case (relying on \eqref{eq:kinetic_energy_bounded} and \eqref{eq:de_finetti_borel_measure_almost_proba}); we do not repeat the arguments here to avoid cluttering the proof. Once this is shown, the conclusion of Theorem~\ref{th:convergence_nls_theory} is essentially the same as in the purely attractive case.
	
	\section{Plain wave estimate and other results}
	
	\label{section:main_estimate_other_results}
	
	\begin{proof}[Proof of Proposition~\ref{prop:plane_wave_estimate}]
		Take $k\in\mathbb{R}^2\setminus\{0\}$ and let us prove \eqref{eq:plane_wave_estimate} for $\mathbf{e}_k = \cos(k\cdot x)$ ($\mathbf{e}_k = \sin(k\cdot x)$ being similar). The bound for $p = 0$ is just $\vert\mathbf{e}_k\vert \leq 1$. For $p = 1$, we take some smooth compactly supported function $f$ and write
		\begin{align*}
			\langle P_if, \cos(k\cdot x)P_if\rangle &= \int_{\mathbb{R}^2}\vert P_if\vert^2\cos(k\cdot x)\d{}x = -\int\nabla \vert P_if\vert^2\frac{k \sin(k\cdot x)}{\vert k\vert^2}\d{}x\\
			&\leq \frac{2}{\vert k\vert}\int_{\mathbb{R}^2}\big\vert \nabla \vert P_if\vert \big\vert \vert P_if\vert \leq \frac{2}{\vert k\vert}\Vert\nabla P_if\Vert\Vert P_if\Vert \leq C\frac{N^{i\varepsilon}}{\vert k\vert}\Vert P_if\Vert^2,
		\end{align*}
		for $N$ large enough. Here, we wrote $\cos(k\cdot x) = k\cdot \nabla\sin(k\cdot x)/\vert k\vert^2$ and used an integration by parts. Similarly, to prove \eqref{eq:plane_wave_estimate} for $p \geq 2$ we do $p$ integration by parts instead of a single one.
		
		To prove \eqref{eq:plane_wave_estimate_potential}, we first decompose $w_N$ in Fourier space
		\begin{align*}
			w_N(x-y) &= \int_{\mathbb{R}^2}\d{}k \hat{w}(N^{-\beta}k)e^{\mathbf{i}k\cdot(x-y)}\\
			&= \int_{\mathbb{R}^2}\d{}k \hat{w}(N^{-\beta}k)\left(\cos(k\cdot x)\cos(k\cdot y)+\sin(k\cdot x)\sin(k\cdot y)\right). \numberthis \label{eq:plane_wave_decomposition_potential}
		\end{align*}
		Conjugating the previous expression by $P_{i_1}\otimes P_{i_2}$, we see that we must bound terms of the form $P_{i}\mathbf{e}_kP_{i}$. Decomposing the integral \eqref{eq:plane_wave_decomposition_potential} into integrals over $\{|k| \leq N^{i_1\varepsilon}\}$, $\{N^{i_1\varepsilon} < |k|\}$ (taking $i_1 \leq i_2$), and using the estimate \eqref{eq:plane_wave_estimate} with $p = 0$ for the first integral and $p = 3$ for the second one, we obtain
		\begin{align*}
			P_{i_1}\otimes P_{i_2} w_N(x-y)P_{i_1}\otimes P_{i_2} &\geq -\int_{\vert k\vert \leq N^{i_1\varepsilon}}\d{}k\vert \hat{w}(N^{-\beta}k)\vert P_{i_1}\otimes P_{i_2}\\
			&\phantom{\geq} -\int_{N^{i_1\varepsilon} <\vert k \vert}\d{}k\vert \hat{w}(N^{-\beta}k)\vert\dfrac{N^{3i_1\varepsilon}}{|k|^3} P_{i_1}\otimes P_{i_2}.
		\end{align*}
		Bounding $\vert \hat{w}\vert$ by $\|\hat{w}\|_{L^\infty} \leq \Vert w\Vert_{L^1}$ and integrating over $k$ directly implies \eqref{eq:plane_wave_estimate_potential}, thereby finishing the proof of Proposition~\ref{prop:plane_wave_estimate}.
	\end{proof}
	
	We now prove Lemmas~\ref{lemma:state_two_projections_full_trace}~and~\ref{lemma:state_one_projection_new_state}. Define the permutation operator $U_\sigma: \mfH^N \rightarrow \mfH^N$, for some $\sigma\in S_N$, acting as
	\begin{equation*}
		U_\sigma\Psi(x_1,\dots,x_N) = \Psi(x_{\sigma(1)},\dots,_{\sigma(N)}),
	\end{equation*}
	for all $\Psi\in\mfH^N$. Then, that a state $\Gamma\in\mathcal{S}(\mfH^N)$ is symmetric can be restated as
	\begin{equation}
		\label{eq:state_symmetric_restated}
		U_\sigma\Gamma U_\sigma^* = \Gamma,.
	\end{equation}
	for all permutations $\sigma\in S_N$. Moreover, for any family $\{P_1,\dots,P_k\}$ of orthogonal projections on $\mfH$, we can write
	\begin{equation}
		\label{eq:symmetric_tensor_product_orthogonal_projections_rewritten}
		P_1^{\otimes n_1}\otimes_\textmd{s}\dots\otimes_\textmd{s}P_k^{\otimes n_k} = \dfrac{1}{n_1!\cdots n_k!}\sum_{\sigma\in S_N}U_\sigma P_1^{\otimes n_1}\otimes\dots\otimes P_k^{\otimes n_k}U_\sigma^*,
	\end{equation}
	for any nonnegative integers $n_1,\dots,n_k$ such that $\sum_{i = 1}^kn_i = N$.
	
	\begin{proof}[Proof of Lemma~\ref{lemma:state_two_projections_full_trace}]
		We only prove \eqref{eq:state_two_projections_full_trace} since \eqref{eq:state_one_projection_full_trace} follows similarly. We begin by writing
		\begin{equation*}
			\Tr\big(P_1\otimes P_2\Gamma^{(2)}\big) = \Tr\big(P_1\otimes P_2\otimes \mathds{1}^{\otimes(N - 2)}\Gamma\big) = \Tr\big(P_1\otimes P_2\otimes(P_1 + P_2 + Q)^{\otimes(N - 2)}\Gamma\big).
		\end{equation*}
		Developing $(P_1 + P_2 + Q)^{\otimes(N - 2)}$ and using that, since $\Gamma$ acts on $P_1^{\otimes j_1}\otimes_\textmd{s}P_2^{\otimes j_2}\otimes_\textmd{s}Q^{\otimes (N - j_1 - j_2)}\mfH^N$, the only terms that do not vanish are the ones containing $j_1 - 1$ times $P_1$, $j_2 - 1$ times $P_2$ and $N - j_1 - j_2$ times $Q$, we obtain
		\begin{equation*}
			\Tr\big(P_1\otimes P_2\Gamma^{(2)}\big) = \Tr\big(P_1\otimes P_2\otimes \big(P_1^{\otimes(j_1 - 1)}\otimes_\textmd{s}P_2^{\otimes(j_2 - 1)}\otimes_\textmd{s}Q^{\otimes(N - j_1 - j_2)}\big)\Gamma\big).
		\end{equation*}
		Thanks to \eqref{eq:state_symmetric_restated}, this becomes
		\begin{equation*}
			\Tr\big(P_1\otimes P_2\Gamma^{(2)}\big) = \dfrac{1}{N!}\sum_{\sigma\in S_N}\Tr\big(U_\sigma^*P_1\otimes P_2\otimes\big(P_1^{\otimes(j_1 - 1)}\otimes_\textmd{s}P_2^{\otimes (j_2 - 1)}\otimes_\textmd{s}Q^{\otimes(N - j_1 - j_2)}\big)U_\sigma\Gamma\big).
		\end{equation*}
		Then, using the cyclicity of the trace as well as \eqref{eq:symmetric_tensor_product_orthogonal_projections_rewritten}, we find
		\begin{align*}
			\Tr\big(P_1\otimes P_2\Gamma^{(2)}\big)
			&= 
			\begin{multlined}[t]
				\dfrac{1}{N!(j_1 - 1)!(j_2 - 1)!(N - j_1 - j_2)!}\sum_{\sigma\in S_N}\sum_{\pi\in S_{N - 2}}\\
				\Tr\big(U_\sigma^*P_1\otimes P_2\otimes U_\pi^*P_1^{\otimes(j_1 - 1)}\otimes P_2^{\otimes(j_2 - 1)}\otimes Q^{\otimes(N - j_1 - j_2)}U_\pi U_\sigma\Gamma\big)
			\end{multlined}\\
			&= 
			\begin{multlined}[t]
				\dfrac{(N - 2)!}{N!(j_1 - 1)!(j_2 - 1)!(N - j_1 - j_2)!}\\
				\qquad\qquad\qquad\times\sum_{\sigma\in S_N}\Tr\big(U_\sigma^*P_1^{\otimes j_1}\otimes P_2^{\otimes j_2}\otimes Q^{\otimes(N - j_1 - j_2)}U_\sigma\Gamma\big)
			\end{multlined}\\
			&= \dfrac{j_1j_2}{N(N - 1)}\Tr\big(P_1^{\otimes j_1}\otimes_\textmd{s}P_2^{\otimes j_2}\otimes_\textmd{s}Q^{\otimes(N - j_1 - j_2)}\Gamma\big)\\
			&= \dfrac{j_1j_2}{N(N - 1)}.
		\end{align*}
	\end{proof}
	
	\begin{proof}[Proof of Lemma~\ref{lemma:state_one_projection_new_state}]
		Define
		\begin{equation*}
			\Gamma_{j} = {N \choose j}\Tr_{j+1 \rightarrow N}\big(P^{\otimes j}\otimes Q^{\otimes (N - j)}\Gamma\big).
		\end{equation*}
		Firstly, it is easy to verify that $\Gamma_{j}$ is a symmetric state and that it lives in the desired space. Secondly, using again the symmetry of $\Gamma$ and the cyclicity of the trace, we have
		\begin{align*}
			\Gamma_{j}^{(2)} &= {N \choose j}\Tr_{3\rightarrow N}\big(P^{\otimes j}\otimes Q^{\otimes (N - j)}\Gamma\big)\\
			&= {N \choose j}\dfrac{1}{(N - 2)!}\sum_{\sigma\in S_{N - 2}}\Tr_{3\rightarrow N}\big(P^{\otimes 2}\otimes U_\sigma^*P^{\otimes(j - 2)}\otimes Q^{\otimes(N - j)}U_\sigma\Gamma\big)\\
			&= {N \choose j}\dfrac{(j - 2)!(N - j)!}{(N - 2)!}\Tr_{3\rightarrow N}P^2\big(P^{\otimes 2}\otimes \big(P^{\otimes(j - 2)}\otimes_\textmd{s} Q^{\otimes(N - j)}\big)\Gamma\big)P^{\otimes 2}\\
			&= \dfrac{N(N - 1)}{j(j - 1)}P^{\otimes 2}\Tr_{3\rightarrow N}\big(P^{\otimes j}\otimes_\textmd{s}Q^{\otimes(N - j)}\Gamma\big)P^{\otimes 2},
		\end{align*}
		which is the desired claim.
	\end{proof}
	
	\section{Proof of the main technical estimate: Lemma~\ref{lemma:bound_interaction_low_occupancy}}
	
	\label{section:proof_main_theorem_main_lemma}
	
	This section is dedicated to the proof of Lemma~\ref{lemma:bound_interaction_low_occupancy}. Although the proof used in the exponential case covers polynomial scalings as well, we provide first a simpler one in that case for the reader's convenience.
	
	\begin{proof}[Proof of Lemma~\ref{lemma:bound_interaction_low_occupancy} in the polynomial case]
		Thanks to the Cauchy--Schwarz inequality and by definition of $\Gamma_{\underline{J}^{(i_1,i_2;i_1',i_2')}}$, we have
		\begin{align*}
			\big\vert\Tr\big(P_{i_1}\otimes P_{i_2}w_NP_{i_1'}\otimes P_{i_2'}\Gamma_{\underline{J}^{(i_1,i_2;i_1',i_2')}}^{(2)}\big)\big\vert &\leq \dfrac{N^{3\varepsilon}}{2}\Tr\big(P_{i_1}\otimes P_{i_2}\vert w_N\vert P_{i_1}\otimes P_{i_2}\Gamma_{\underline{J}^{(i_1,i_2)}}^{(2)}\big)\\
			&\phantom{\leq} + \dfrac{N^{-3\varepsilon}}{2}\Tr\big(P_{i_1'}\otimes P_{i_2'}\vert w_N\vert P_{i_1'}\otimes P_{i_2'}\Gamma_{\underline{J}^{(i_1',i_2')}}^{(2)}\big).
		\end{align*}
		Then, using Proposition~\ref{prop:plane_wave_estimate} and Lemma~\ref{lemma:state_two_projections_full_trace}, we deduce
		\begin{multline*}
			\big\vert\Tr\big(P_{i_1}\otimes P_{i_2}w_NP_{i_1'}\otimes P_{i_2'}\Gamma_{\underline{J}^{(i_1,i_2;i_1',i_2')}}^{(2)}\big)\big\vert \leq C N^{2\min(i_1,i_2)\varepsilon}N^{3\varepsilon}\dfrac{(j_{i_1} + 1)(j_{i_2} + 1)}{N(N - 1)}\Tr\Gamma_{\underline{J}^{(i_1,i_2)}}\\
			\phantom{\leq} + CN^{2\min(i_1',i_2')\varepsilon}N^{-3\varepsilon}\dfrac{(j_{i_1'} + 1)(j_{i_2'} + 1)}{N(N - 1)}\Tr\Gamma_{\underline{J}^{(i_1',i_2')}}.
		\end{multline*}
		On the one hand, using that $j_{i_1} < N^{1 - \delta\varepsilon}$ and Lemma~\ref{lemma:state_two_projections_full_trace}, we have
		\begin{align*}
			N^{2\min(i_1,i_2)\varepsilon}N^{3\varepsilon}\dfrac{(j_{i_1} + 1)(j_{i_2} + 1)}{N(N - 1)}\Tr\Gamma_{\underline{J}^{(i_1,i_2)}} &\leq N^{(5 - \delta)\varepsilon}N^{2(i_2 - 1)\varepsilon}\dfrac{j_{i_2} + 1}{N}\Tr\Gamma_{\underline{J}^{(i_1,i_2)}}\\
			&\leq CN^{(5 - \delta)\varepsilon}\Tr\big(P_{i_2}(1 + h)P_{i_2}\Gamma_{\underline{J}^{(i_1,i_2)}}^{(1)}\big).
		\end{align*}
		On the other hand,
		\begin{equation*}
			N^{2\min(i_1',i_2')\varepsilon}N^{-3\varepsilon}\dfrac{(j_{i_1'} + 1)(j_{i_2'} + 1)}{N(N - 1)}\Tr\Gamma_{\underline{J}^{(i_1',i_2')}} \leq CN^{-\varepsilon}\Tr\big(P_{i_1'}\otimes P_{i_2'}(1 + h_1)P_{i_1'}\otimes P_{i_2'}\Gamma_{\underline{J}^{(i_1',i_2')}}^{(2)}\big).
		\end{equation*}
		Taking $\delta \geq 6$, and using 
		\begin{equation*}
			\sum_{\underline{J}}P_{i_2}\Gamma_{\underline{J}^{(i_1,i_2)}}^{(1)}P_{i_2} \leq P_{i_2}\Gamma^{(1)}P_{i_2}
		\end{equation*}
		and
		\begin{equation*}
			\sum_{\underline{J}}P_{i_1'}\otimes P_{i_2'}\Gamma_{\underline{J}^{(i_1',i_2')}}^{(2)}P_{i_1'}\otimes P_{i_2'} = P_{i_1'}\otimes P_{i_2'}\Gamma^{(2)}P_{i_1'}\otimes P_{i_2'},
		\end{equation*}
		we therefore get
		\begin{multline*}
			\sum_{\underline{J}}\big\vert\Tr\big(P_{i_1}\otimes P_{i_2}w_NP_{i_1'}\otimes P_{i_2'}\Gamma_{\underline{J}^{(i_1,i_2)},\underline{J}^{i_1',i_2'}}^{(2)}\big)\big\vert \leq CN^{-\varepsilon}\Tr\big(P_{i_2}(1 + h)P_{i_2}\Gamma^{(1)}\big)\\
			+ CN^{-\varepsilon}\Tr\big(P_{i_1'}\otimes P_{i_2'}(1 + h_1)P_{i_1'}\otimes P_{i_2'}\Gamma^{(2)}\big).
		\end{multline*}
		Finally, we sum over $i_1,i_2,i_1',i_2'$, and use that $M$ is a constant to obtain \eqref{eq:bound_interaction_low_occupancy}.
	\end{proof}
	
	In the exponential case, $M$ is no longer bounded, which prevents us from using the previous simple proof (except for $\kappa$ small), and we consequently have to be much more precise.
	
	\begin{proof}[Proof of Lemma~\ref{lemma:bound_interaction_low_occupancy} in the exponential case]
		We distinguish between $\vert i_1 - i_2\vert \leq 2$ and $\vert i_1 - i_2\vert > 2$, and similarly for $i_1',i_2'$ and $i_1,i_1'$. Because we only to use $j_{i_1} \leq N^{1- \delta\varepsilon}$ when $\vert i_1 - i_2\vert \leq 2$, $\vert i_1' - i_2'\vert \leq 2$ and $\vert i_1 - i_1'\vert \leq 2$, we begin by treating the other cases. Namely, we first bound
		\begin{equation}
			\label{eq:interaction_low_occupancy_bound_far_1}
			\sum_{\underline{J}}\sum_{\substack{i_1\leq i_2\\ i_1 + 2 < i_2}}\sum_{\substack{i_1'\leq i_2'}}\big\vert\Tr\big(P_{i_1}\otimes P_{i_2}w_NP_{i_1'}\otimes P_{i_2'}\Gamma_{\underline{J}^{(i_1,i_2;i_1',i_2')}}^{(2)}\big)\big\vert
		\end{equation}
		and
		\begin{equation}
			\label{eq:interaction_low_occupancy_bound_far_2}
			\sum_{\underline{J}}\sum_{\substack{i_1\leq i_2\\ \vert i_1 - i_2\vert \leq 2}}\sum_{\substack{i_1'\leq i_2'\\ \vert i_1' - i_2'\vert \leq 2\\ i_1 + 2 < i_1'}}\big\vert\Tr\big(P_{i_1}\otimes P_{i_2}w_NP_{i_1'}\otimes P_{i_2'}\Gamma_{\underline{J}^{(i_1,i_2;i_1',i_2')}}^{(2)}\big)\big\vert.
		\end{equation}
		In both cases we can take the sum over $i_1$ to start from $1$ rather than $\imax + 1$ for simplicity. Thanks to the Cauchy--Schwarz inequality, Lemma~\ref{lemma:state_two_projections_full_trace} and Proposition~\ref{prop:plane_wave_estimate}, we have
		\begin{multline}
			\label{eq:interaction_low_occupancy_bound_key_cauchy_schwarz}
			\big\vert\Tr\big(P_{i_1}\otimes P_{i_2}w_NP_{i_1'}\otimes P_{i_2'}\Gamma_{\underline{J}^{(i_1,i_2;i_1',i_2')}}^{(2)}\big)\big\vert \leq \tau CN^{(i_1 + i_1')\varepsilon}\dfrac{(j_{i_1} + 1)(j_{i_2} + 1)}{N(N - 1)}\Tr\Gamma_{\underline{J}^{(i_1,i_2)}}\\
			+ \tau^{-1}CN^{(i_1 + i_1')\varepsilon}\dfrac{(j_{i_1'} + 1)(j_{i_2'} + 1)}{N(N - 1)}\Tr\Gamma_{\underline{J}^{(i_1',i_2')}},
		\end{multline}
		for all $\tau > 0$. Thus, an appropriate choice of $\tau$ and an application of Lemma~\ref{lemma:state_two_projections_full_trace} allow us to write
		\begin{multline*}
			\big\vert\Tr\big(P_{i_1}\otimes P_{i_2}w_NP_{i_1'}\otimes P_{i_2'}\Gamma_{\underline{J}^{(i_1,i_2;i_1',i_2')}}^{(2)}\big)\big\vert\\
			\begin{aligned}[t]
				&\leq CN^{(i_1 - i_2 + 1)\varepsilon}N^{(i_1' - i_2' + 1)\varepsilon}\Tr\big(P_{i_2}(1 + h)P_{i_2}\otimes\big(P_{i_1'} + N^{-1}\big)\Gamma_{\underline{J}^{(i_1,i_2)}}^{(2)}\big)\\
				&\phantom{\leq} + CN^{(i_1 - i_2 + 1)\varepsilon}N^{(i_1' - i_2' + 1)\varepsilon}\Tr\big(P_{i_2'}(1 + h)P_{i_2'}\otimes\big(P_{i_1} + N^{-1}\big)\Gamma_{\underline{J}^{(i_1',i_2')}}^{(2)}\big).
			\end{aligned}
		\end{multline*}
		The reason we have the factor $N^{-1}$ in the right-hand side is to deal with the possibility of $j_{i_1}$ and $j_{i_1'}$ being null. Using that
		\begin{equation}
			\label{eq:interaction_low_occupancy_sum_over_J}
			\sum_{\underline{J}}P_{i_1}\otimes P_{i_2}\Gamma_{\underline{J}^{(i_1,i_2)}}^{(2)}P_{i_1}\otimes P_{i_2} = P_{i_1}\otimes P_{i_2}\Gamma^{(2)}P_{i_1}\otimes P_{i_2},
		\end{equation}
		we may carry out the $\underline{J}$ sum in \eqref{eq:interaction_low_occupancy_bound_far_1} first to obtain
		\begin{multline*}
			\sum_{\underline{J}}\sum_{\substack{i_1 + 2 < i_2}}\sum_{\substack{i_1'\leq i_2'}}\big\vert\Tr\big(P_{i_1}\otimes P_{i_2}w_NP_{i_1'}\otimes P_{i_2'}\Gamma_{\underline{J}^{(i_1,i_2;i_1',i_2')}}^{(2)}\big)\big\vert\\
			\begin{aligned}[t]
				&\leq \sum_{i_1 + 2 < i_2}\sum_{i_1'\leq i_2'}CN^{(i_1 - i_2 + 1)\varepsilon}N^{(i_1' - i_2' + 1)\varepsilon}\Tr\big(P_{i_2}hP_{i_2}\otimes\big(P_{i_1'} + N^{-1}\big)\Gamma^{(2)}\big)\\
				&\phantom{\leq} + \sum_{i_1 + 2 < i_2}\sum_{i_1'\leq i_2'}CN^{(i_1 - i_2 + 1)\varepsilon}N^{(i_1' - i_2' + 1)\varepsilon}\Tr\big(P_{i_2'}hP_{i_2'}\otimes\big(P_{i_1} + N^{-1}\big)\Gamma^{(2)}\big).
			\end{aligned}
		\end{multline*}
		Then, carrying out the remaining four sums using the geometric series formula, the resolution of the identity
		\begin{equation*}
			\sum_{i = 1}^{M + 1}P_i = 1
		\end{equation*}
		and the fact that $M$ satisfies \eqref{eq:relation_M_epsilon_expo},we obtain
		\begin{equation}
			\label{eq:interaction_low_occupancy_bound_far_1_estimate}
			\sum_{\underline{J}}\sum_{\substack{i_1 + 2 < i_2}}\sum_{\substack{i_1'\leq i_2'}}\big\vert\Tr\big(P_{i_1}\otimes P_{i_2}w_NP_{i_1'}\otimes P_{i_2'}\Gamma_{\underline{J}^{(i_1,i_2;i_1',i_2')}}^{(2)}\big)\big\vert \leq CN^{-\varepsilon}\Tr\big((1 + h)\Gamma^{(1)}\big).
		\end{equation}
		Hence, we have bounded \eqref{eq:interaction_low_occupancy_bound_far_1} as desired.
		
		To bound \eqref{eq:interaction_low_occupancy_bound_far_2}, we again use \eqref{eq:interaction_low_occupancy_bound_key_cauchy_schwarz} to write
		\begin{multline*}
			\big\vert\Tr\big(P_{i_1}\otimes P_{i_2}w_NP_{i_1'}\otimes P_{i_2'}\Gamma_{\underline{J}^{(i_1,i_2;i_1',i_2')}}^{(2)}\big)\big\vert\\
			\begin{aligned}[t]
				&\leq CN^{(i_1 - i_1' + 1)\varepsilon}N^{(i_1' - i_2' + 1)\varepsilon}\Tr\big(P_{i_1'}(1 + h)P_{i_1'}\otimes P_{i_1}\Gamma_{\underline{J}^{(i_1,i_2)}}^{(2)}\big)\\
				&\phantom{\leq} + CN^{(i_1 - i_1' + 1)\varepsilon}N^{(i_1' - i_2' + 1)\varepsilon}\Tr\big(P_{i_2'}(1 + h)P_{i_2'}\otimes\big(P_{i_2} + N^{-1}\big)\Gamma_{\underline{J}^{(i_1',i_2')}}^{(2)}\big),
			\end{aligned}
		\end{multline*}
		when $j_{i_1'} \geq 1$. When $j_{i_1'} = 0$, the previous expression cannot by made true by simply replacing $P_{i_1'}$ by $1/\sqrt{N}$, and we instead write
		\begin{multline*}
			\big\vert\Tr\big(P_{i_1}\otimes P_{i_2}w_NP_{i_1'}\otimes P_{i_2'}\Gamma_{\underline{J}^{(i_1,i_2;i_1',i_2')}}^{(2)}\big)\big\vert\\
			\begin{aligned}[t]
				&\leq CN^{2\varepsilon}N^{-(1 + \kappa)/2}\Tr\big(P_{i_1}(1 + h)P_{i_1}\Gamma_{\underline{J}^{(i_1,i_2)}}^{(1)}\big)\\
				&\phantom{\leq} + CN^{2\varepsilon}N^{-(1 - \kappa)/2}\Tr\big(\big(P_{i_2} + N^{-1}\big)\otimes P_{i_2'}(1 + h)P_{i_2'}\Gamma_{\underline{J}^{(i_1',i_2')}}^{(2)}\big).
			\end{aligned}
		\end{multline*}
		Carrying out the sums similarly as before yields
		\begin{multline}
			\label{eq:interaction_low_occupancy_bound_far_2_estimate}
			\sum_{\underline{J}}\sum_{\substack{i_1\leq i_2\\ \vert i_1 - i_2\vert \leq 2}}\sum_{\substack{i_1'\leq i_2'\\ \vert i_1' - i_2'\vert \leq 2\\ i_1 + 2 < i_1'}}\big\vert\Tr\big(P_{i_1}\otimes P_{i_2}w_NP_{i_1'}\otimes P_{i_2'}\Gamma_{\underline{J}^{(i_1,i_2;i_1',i_2')}}^{(2)}\big)\big\vert\\
			\leq C(N^{-\varepsilon} + N^{2\varepsilon -(1 - \kappa)/2})\Tr\big((1 + h)\Gamma^{(1)}\big).
		\end{multline}
		Thus, we have shown that \eqref{eq:interaction_low_occupancy_bound_far_2} is bounded as claimed in Lemma~\ref{lemma:bound_interaction_low_occupancy}.
		
		What now remains is to deal with
		\begin{equation*}
			\sum_{\underline{J}}\sum_{\substack{i_1\leq i_2\\ \vert i_1 - i_2\vert \leq 2}}\sum_{\substack{i_1'\leq i_2'\\ \vert i_1' - i_2'\vert \leq 2\\ \vert i_1 - i_1'\vert \leq 2}}\big\vert\Tr\big(P_{i_1}\otimes P_{i_2}w_NP_{i_1'}\otimes P_{i_2'}\Gamma_{\underline{J}^{(i_1,i_2;i_1',i_2')}}^{(2)}\big)\big\vert,
		\end{equation*}
		where the sum over $i_1$ starts with $i_1 = \imax + 1$, meaning that we always have $j_{i_1} \leq N^{1 - \delta\varepsilon}$. Using once more \eqref{eq:interaction_low_occupancy_bound_key_cauchy_schwarz}, as well as Lemma~\ref{lemma:state_two_projections_full_trace} and the fact that $j_{i_1} \leq N^{1 - \delta\varepsilon}$, we obtain
		\begin{multline*}
			\big\vert\Tr\big(P_{i_1}\otimes P_{i_2}w_NP_{i_1'}\otimes P_{i_2'}\Gamma_{\underline{J}^{(i_1,i_2;i_1',i_2')}}^{(2)}\big)\big\vert\\
			\begin{aligned}[t]
				&\leq CN^{(i_1 - i_2 + 1)\varepsilon}N^{(i_1' - i_2' + 1)\varepsilon}N^{-\delta\varepsilon/2}\Tr\big(P_{i_2}hP_{i_2}\Gamma_{\underline{J}^{(i_1,i_2)}}^{(1)}\big)\\
				&\phantom{\leq} + CN^{(i_1 - i_2 + 1)\varepsilon}N^{(i_1' - i_2' + 1)\varepsilon}N^{-\delta\varepsilon/2}\Tr\big(P_{i_2'}hP_{i_2'}\Gamma_{\underline{J}^{(i_1',i_2')}}^{(1)}\big).
			\end{aligned}
		\end{multline*}
		Carrying out the sums as above, we finally obtain
		\begin{equation}
			\label{eq:interaction_low_occupancy_bound_close_estimate}
			\sum_{\underline{J}}\sum_{\substack{i_1\leq i_2\\ \vert i_1 - i_2\vert \leq 2}}\sum_{\substack{i_1'\leq i_2'\\ \vert i_1' - i_2'\vert \leq 2\\ \vert i_1 - i_1'\vert \leq 2}}\big\vert\Tr\big(P_{i_1}\otimes P_{i_2}w_NP_{i_1'}\otimes P_{i_2'}\Gamma_{\underline{J}^{(i_1,i_2;i_1',i_2')}}^{(2)}\big)\big\vert \leq CN^{2\varepsilon - \delta\varepsilon/2}\Tr\big((1 + h)\Gamma^{(1)}\big).
		\end{equation}
		Gathering the estimates \eqref{eq:interaction_low_occupancy_bound_far_1_estimate}--\eqref{eq:interaction_low_occupancy_bound_close_estimate}, and taking $\delta \geq 6$ yields \eqref{eq:bound_interaction_low_occupancy}.
	\end{proof}
	
	\printbibliography
	
	\end{document}